\theoremstyle{definition}
\newtheorem{definition}{Definition}
\newtheorem{remark}{Remark}
\theoremstyle{plain}
\newtheorem{proposition}{Proposition}
\newtheorem{lemma}{Lemma}
\def\R{{\ensuremath{\mathbb{R}}}}
\DeclareMathOperator{\E}{\mathbb{E}}
\let\P\undefined
\DeclareMathOperator{\P}{\mathbb{P}}
\let\leqa\lesssim 
\let\geqa\gtrsim 
\def\@IEEEproof[#1]{\par\noindent{\itshape #1: }}
\begin{document}
\IEEEoverridecommandlockouts

\title{{\it At Every Corner}\/:\\\LARGE  Determining Corner Points of Two-User Gaussian Interference Channels}
\author{\IEEEauthorblockN{Olivier Rioul}
\IEEEauthorblockA{LTCI, T\'el\'ecom ParisTech,\\
Universit\'e Paris-Saclay, 75013 Paris, France\\
Email: olivier.rioul@telecom-paristech.fr}
}

\maketitle

\begin{abstract}
The corner points of the capacity region of the two-user Gaussian interference channel under strong or weak interference are determined using the notions of almost Gaussian random vectors, almost lossless addition of random vectors,  and almost linearly dependent random vectors. 
In particular, the ``missing'' corner point problem is solved in a manner that differs from previous works in that it avoids the use of integration over a continuum of SNR values or of Monge-Kantorovitch transportation problems.
\end{abstract}

\section{Introduction}

This work is about the complete determination of corner points of the capacity region of the two-user Gaussian interference channel.
Some classical ingredients are Fano's inequality, the data processing inequality (DPI), the maximum entropy (MaxEnt) property under a power constraint, the entropy power inequality (EPI), and the concavity of the entropy power.
Interestingly, only  weak forms of the latter two are required.
To these ingredients we add the notions of almost Gaussian random vectors, almost lossless addition of random vectors,  and almost linearly dependent random vectors. 

The determination of the second corner point under weak interference is the content of Costa's corner point conjecture~\cite{Costa85}.
This conjecture has been settled recently and independently by Polyanskiy and Wu~\cite{Polyanskiy15} (using optimal transport theory) and Bustin \textit{et al.}~\cite{Bustin14,Bustin15} (using the I-MMSE relation). The approach described here is  a natural continuation from previous works~\cite{CostaRioul14,CostaRioul15,RioulCosta15,RioulCosta16} that is very close in spirit to the solution of Polyanskiy and Wu. However, it is more direct because is sidesteps the notion of Wasserstein distance associated to a Monge–Kantorovich problem.

\section{Definitions and Notations}

Throughout the paper we consider zero-mean random vectors taking values in $\R^n$ and
let $\|\cdot\|$ denote the Euclidean norm in~$\R^n$.
Consider the two-user Gaussian interference channel in standard form (Fig.~\ref{GIC}):
\begin{equation}
\begin{split}
Y_1 &= X_1 + \sqrt{b} X_2 + Z_1 \\
Y_2 &= \sqrt{a} X_1 + X_2 + Z_2, 
 \end{split}
\end{equation}
where 
the joint distribution of the Gaussian noises $(Z_1,Z_2)$ at the decoder sides is not relevant as there is no cooperation between the receivers. We find it notationally convenient to set $Z_1=Z_2=Z$. The corresponding noise powers are $N_1=N_2=N$.
Sender $i=1,2$ produces a uniformly distributed $M_i$-ary message $W_i$, where $W_1$ and $W_2$ are independent.
Encoder~$i$ maps $W_i$ to a random vector $X_i\in\R^n$ of dimension~$n$ which satisfies the power constraint $\|X_i\|^2 \leq nP_i$.
Decoder~$i$ maps the output $Y_i$ to an $M_i$-ary decoded message~$\hat{W}_i$.

\begin{figure}[t]
\centering
\begin{tikzpicture}[xscale=1.3,yscale=1.1,>=latex]
\node(W1)at(0.2,0){$W_1$};
\node(X1)at(1,0){$X_1$};
\node[coordinate](X1')at(1.5,0){};
\node[dspadder](add1)at(4,0){};
\node(Y1)at(5,0){$Y_1$};
\node(W^1)at(5.8,0){$\widehat{W}_1$};

\node(W2)at(0.2,-2){$W_2$};
\node(X2)at(1,-2){$X_2$};
\node[coordinate](X2')at(1.5,-2){};
\node[dspadder](add2)at(4,-2){};
\node(Y2)at(5,-2){$Y_2$};
\node(W^2)at(5.8,-2){$\widehat{W}_2$};

\node(Z)at(4,-1){$Z$};

\draw[->] (X1) 
--(add1) node[midway,above]{$_1$};
\draw[->](add1)--(Y1);
\draw[->] (X2) 
--(add2) node[midway,below]{$_1$};
\draw[->](add2)--(Y2);

\node(P1)at(1,-0.5){($P_1$)};
\node(P2)at(1,-1.5){($P_2$)};

\draw[->](X1')--(add2) node[pos=0.7,below]{$\sqrt{a}$};
\draw[->](X2')--(add1) node[pos=0.7,above]{$\sqrt{b}$};

\draw[->] (Z)--(add1) node[midway,right]{($N_1$)};
\draw[->] (Z)--(add2) node[midway,right]{($N_2$)};
\end{tikzpicture}
\caption{Gaussian interference channel.}\label{GIC}
\end{figure}
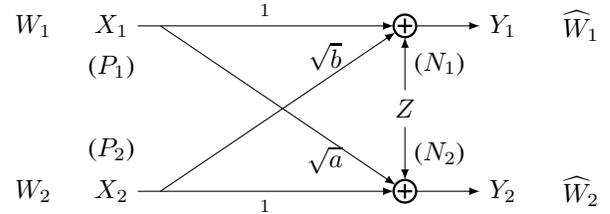

The {capacity region} of the channel may be defined as the set of all limit points of all sequences $(R_1,R_2)$ for which the corresponding sequence of encoding and decoding functions with $M_i= e^{n R_i}$  are such that $\P\{\hat{W}_i\ne W_i\}$ ($i=1,2$) tend to $0$  as $n\to+\infty$. 
Note that $R_1$, $R_2$, $W_1$, $W_2$, $X_1$, $X_2$, $Y_1$, $Y_2$, $Z_1$, $Z_2$ all depend on the dimension $n$. However, $P_1$, $P_2$ and $N$ are constants, independent of $n$. Because $n$ is taken arbitrarily large, it is convenient to use the following notation.
\begin{definition}[Almost Inequalities $\leqa$ and $\geqa$]
 Let $\epsilon(n)$ denote \emph{any} positive function of $n$ which tends to $0$ as $n\to+\infty$ (thus we can write, for example, $\epsilon(n)+\epsilon(n)=\epsilon(n)$). Given real number sequences $A_n,B_n$, we write $A_n\leqa B_n$ ($A_n$ is \textit{almost less} than $B_n$) if
\begin{equation}
A_n\leq B_n + n\epsilon(n) \iff B_n \geq A_n - n\epsilon(n).
 \end{equation}
We also write $B_n\geqa A_n$ ($B_n$ is \emph{almost greater} than $A_n$).
\end{definition}

The capacity region is a subset of the rectangle $R_1\leq C_1$, $R_2\leq C_2$, where $C_i=(1/2)\log (1+P_i/N_i)$ with two \emph{corner points}  $(C_1,C'_2)$ and $(C'_1,C_2)$.
A typical shape is shown in Fig.~\ref{corners}.
That $(C_1,C'_2)$ is a corner point is established by showing that it is achievable and that for any $(R_1,R_2)$ for which the associated probability of error tends to $0$ as $n\to+\infty$,
\begin{subequations}\label{corner}
\begin{equation}
nR_1\geqa nC_1 \implies nR_2 \leqa nC'_2. \label{corner1}
\end{equation}
That $(C'_1,C_2)$ is a corner point is similarly characterized by:
\begin{equation}
nR_2\geqa nC_2  \implies nR_1 \leqa nC'_1. \label{corner2}
\end{equation}
\end{subequations}
Achievability is generally not a problem and is done using classical ingredients such as random coding, onion peeling and rate splitting. 
Therefore, in this paper, we focus exclusively on the derivation of the converse~\eqref{corner}.

\begin{figure}[t]
\centering
\begin{tikzpicture}[scale=0.85]
\draw[->] (-0.2,0) -- (4.5, 0) node[right] {$R_1$};
\draw[->] (0,-0.2) -- (0,3.5) node[above left] {\smash[t]{$R_2$}};
\draw (4,0) node[below] {$C_1$} -- (4,1) node[dspnodeopen] {};
\draw (0,3) node[left] {$C_2$} -- (2,3) node[dspnodeopen] {};
\draw[very thin,loosely dashed] (0,1) node[left] {$C'_2$} --(4,1);
\draw[very thin,loosely dashed] (2,0) node[below] {$C'_1$} --(2,3);
\draw plot [domain=2:4] (\x,{5-\x+0.4*sin(pi*(\x/2-1) r)} );
\end{tikzpicture}
\caption{Corner points $(C_1,C'_2)$ and $(C'_1,C_2)$ of the capacity region (marked with circles).}\label{corners}
\end{figure}
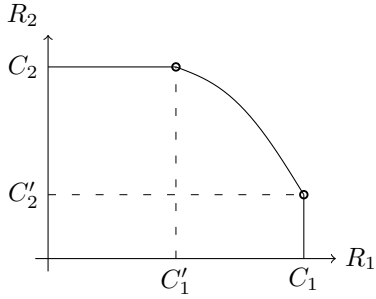

\section{Preliminaries}

Throughout the paper $X^G$ denotes a white Gaussian vector of the same variance as $X$.
\begin{lemma}\label{maxrate}
The condition $nR_1\geqa nC_1$ in~\eqref{corner1} implies
\begin{enumerate}
\item[(a)] $h(X_1+Z) \geqa h(X^G_1+Z)$;
\item[(b)] $I(X_1;Y_1)\geqa I(X_1;X_1+Z)$.
\end{enumerate}
\end{lemma}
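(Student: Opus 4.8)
The plan is to squeeze $I(X_1;Y_1)$ between $nC_1$ from below and $I(X_1;X_1+Z)$ from above, and then to observe that $I(X_1;X_1+Z)\le nC_1$ by the maximum-entropy property under the power constraint; the three quantities are thereby forced to coincide up to an $n\epsilon(n)$ term, and statements (a) and (b) follow immediately.

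First I would convert the rate hypothesis into a bound on $I(X_1;Y_1)$. Since $\P\{\hat W_1\ne W_1\}\to0$ and $R_1$ is bounded (it lies in the capacity rectangle, so $R_1\le C_1$), Fano's inequality gives $H(W_1\mid\hat W_1)\le 1+nC_1\,\P\{\hat W_1\ne W_1\}=n\epsilon(n)$, and the data processing inequality along $W_1\to X_1\to Y_1\to\hat W_1$ then yields $nR_1=H(W_1)\leqa I(W_1;\hat W_1)\le I(X_1;Y_1)$. Hence the hypothesis $nR_1\geqa nC_1$ becomes $I(X_1;Y_1)\geqa nC_1$.

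The crux is then to bound $I(X_1;Y_1)$ from above by handing $X_2$ to decoder~$1$ as a \emph{genie}. One has $I(X_1;Y_1)\le I(X_1;Y_1,X_2)=I(X_1;X_2)+I(X_1;Y_1\mid X_2)$; since $W_1\perp W_2$ forces $X_1\perp X_2$, and $Z\perp(X_1,X_2)$, the first term vanishes while conditioning on $X_2$ only shifts $Y_1$ by the constant $\sqrt b\,X_2$, so that $I(X_1;Y_1\mid X_2)=I(X_1;X_1+Z)$ and therefore $I(X_1;Y_1)\le I(X_1;X_1+Z)$. Finally, the maximum-entropy property under $\|X_1\|^2\le nP_1$ (together with $\E\|Z\|^2=nN$) gives $I(X_1;X_1+Z)=h(X_1+Z)-h(Z)\le\tfrac n2\log\bigl(2\pi e(P_1+N)\bigr)-\tfrac n2\log(2\pi eN)=nC_1$. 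I expect this genie step to be the only non-routine move; the rest is the standard Fano/DPI/MaxEnt machinery plus careful tracking of the $n\epsilon(n)$ terms.

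Chaining the three estimates gives $nC_1\leqa I(X_1;Y_1)\le I(X_1;X_1+Z)\le nC_1$, so all three agree up to $n\epsilon(n)$. Statement~(b), $I(X_1;Y_1)\geqa I(X_1;X_1+Z)$, is then immediate. For~(a), the same chain gives $I(X_1;X_1+Z)\geqa nC_1$, that is $h(X_1+Z)\geqa\tfrac n2\log\bigl(2\pi e(P_1+N)\bigr)$; and since $X_1^G+Z$ is a white Gaussian vector of per-component variance $\tfrac1n\E\|X_1\|^2+N\le P_1+N$, one has $h(X_1^G+Z)\le\tfrac n2\log\bigl(2\pi e(P_1+N)\bigr)$, whence $h(X_1+Z)\geqa h(X_1^G+Z)$.
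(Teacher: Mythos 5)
Your proof is correct and follows essentially the same squeeze as the paper's: $nC_1 \leqa I(X_1;Y_1) \le I(X_1;X_1+Z) \le nC_1$ via Fano, DPI, and MaxEnt, with (a) and (b) then read off from the forced near-equalities. The only cosmetic difference is that where you hand $X_2$ to decoder~1 as a genie to justify $I(X_1;Y_1)\le I(X_1;X_1+Z)$, the paper applies the data processing inequality directly to the Markov chain $X_1\to X_1+Z\to Y_1$ (valid because $\sqrt{b}\,X_2$ is independent of both $X_1$ and $Z$) --- an equivalent and slightly more compact step.
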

\noindent The symmetrical lemma holds for~\eqref{corner2}.

\begin{IEEEproof}
By the classical derivation of the converse:
\begin{subequations}\label{directconverse}
\begin{align}
nR_1 = H(W_1)
&\leqa I(W_1;Y_1) &\text{(Fano)}    \label{HW}\\
&\leq I(X_1;Y_1) &\text{(DPI)}\label{DPI}\\
&\leq I(X_1;X_1+Z)&\text{\llap{(DPI again)}} \label{lossbyinterference}\\
&= h(X_1+Z)-h(Z)\\
&\leq nC_1 &\text{\llap{(MaxEnt)}}\label{maxent}
\end{align}
\end{subequations}
Thus $nR_1\geqa nC_1$ amounts to saying that all quantities in~\eqref{directconverse} are at distance $\leq n\epsilon(n)$. This implies, in particular, (a) from~\eqref{maxent} and (b) from~\eqref{lossbyinterference}. 
\end{IEEEproof}

\begin{remark}
Condition $nR_1\geqa nC_1$ also implies $I(W_1;Y_1)\geqa I(X_1;Y_1)$ which holds (with equality) if the encoder mapping is invertible. In that case $nR_1\geqa nC_1\iff $(a),(b).
\end{remark}

Lemma~\ref{maxrate} naturally leads to the following definitions.
\begin{definition}[AG and AL properties]\ 
Let $X$ have power constraint $\tfrac{1}{n\vphantom{_1}}\E\{\|X\|^2\}\leq P$. We say that $X$ is \emph{almost (white) Gaussian} (AG) if
\begin{equation}
h(X)\geqa  h(X^G).
\end{equation}
Let $Z$ and $Z'$ be mutually independent (not necessarily Gaussian) vectors, independent of $X$. We say that
$X+Z+Z'$ is \emph{almost lossless} (AL) compared to $X+Z$ (with respect to $X$) if
\begin{equation}
I(X;X+Z+Z') \geqa I(X;X+Z).
\end{equation}
\end{definition}
Thus (a), (b) in Lemma~\ref{maxrate} are equivalent to:
{\it\begin{enumerate}
\item[(a)] $X_1+Z$ is AG;
\item[(b)] $X_1\!+\!\sqrt{b}X_2+Z$ is  AL compared to $X_1+Z$ w.r.t. $X_1$.
\end{enumerate}
}\noindent
The latter condition means that adding interference $bX_2$ in $Y_1$ almost does not decrease information. This becomes vacuous in the case of no interference ($b=0$). If $b\ne 0$, condition (b) is equivalent to:
{\it
\begin{enumerate}
\item[(b$'$)] \!$X_1\!+\!\sqrt{b}X_2+Z$ is AL compared to $\sqrt{b}X_2+Z$ w.r.t. $X_2$.
\end{enumerate}
}\noindent
This is a direct consequence of the following lemma, which is particularly important as it allows one to pass from one transmission to the other (Fig.~\ref{forkgraph}).

\begin{lemma}[Fork Lemma]\label{fork}
Let $X_1$, $X_2$ and $Z$ be independent. If $X_1+X_2+Z$ is AL compared to $X_1+Z$ w.r.t.~$X_1$, then it is also AL compared to $X_2+Z$ w.r.t.~$X_2$.
\end{lemma}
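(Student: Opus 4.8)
The plan is to translate the AL hypotheses into statements about mutual informations and exploit the symmetry of the sum $S = X_1 + X_2 + Z$. Write $I(X_1; S) \geqa I(X_1; X_1 + Z)$ for the hypothesis; I want to deduce $I(X_2; S) \geqa I(X_2; X_2 + Z)$. The natural tool is the chain rule for mutual information applied to the pair $(X_1, X_2)$ together with $S$. Since $X_1$, $X_2$, $Z$ are independent, $S$ is a deterministic function of $(X_1, X_2, Z)$, and conditioning on one summand turns $S$ into a ``cleaner'' channel output for the other.

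First I would expand $I(X_1, X_2; S)$ in the two possible orders:
\begin{equation}
I(X_2;S) + I(X_1;S \mid X_2) = I(X_1;S) + I(X_2;S \mid X_1).
\end{equation}
Next, observe that conditioned on $X_2$, the quantity $S = X_1 + Z + (\text{const})$ as far as $X_1$ is concerned, so $I(X_1; S \mid X_2) = I(X_1; X_1 + Z \mid X_2) = I(X_1; X_1 + Z)$ by independence; similarly $I(X_2; S \mid X_1) = I(X_2; X_2 + Z)$. Substituting,
\begin{equation}
I(X_2;S) + I(X_1;X_1+Z) = I(X_1;S) + I(X_2;X_2+Z).
\end{equation}
Now apply the hypothesis $I(X_1;S) \geqa I(X_1;X_1+Z)$, i.e. $I(X_1;S) \geq I(X_1;X_1+Z) - n\epsilon(n)$: rearranging the identity gives $I(X_2;S) = I(X_2;X_2+Z) + I(X_1;S) - I(X_1;X_1+Z) \geq I(X_2;X_2+Z) - n\epsilon(n)$, which is exactly $I(X_2;S)\geqa I(X_2;X_2+Z)$, the desired AL property w.r.t.~$X_2$.

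The only subtlety — and the one place I would be careful — is justifying the two conditional-independence identities $I(X_1;S\mid X_2) = I(X_1;X_1+Z)$ and $I(X_2;S\mid X_1) = I(X_2;X_2+Z)$. These follow because, given $X_2=x_2$, the map $S \mapsto S - x_2$ is invertible, so $I(X_1;S\mid X_2=x_2) = I(X_1;X_1+Z\mid X_2=x_2) = I(X_1;X_1+Z)$ using independence of $X_2$ from $(X_1,Z)$; averaging over $x_2$ preserves the equality. I expect this to be entirely routine, so the lemma is really just the mutual-information chain rule plus the hypothesis, with no entropy-power or Gaussian input needed.
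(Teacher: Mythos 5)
Your proof is correct and establishes exactly the same key identity as the paper's, namely $I(X_2;S)-I(X_2;X_2+Z)=I(X_1;S)-I(X_1;X_1+Z)$; the paper obtains it in one line by expanding both sides into the four differential entropies $h(X_1+X_2+Z)-h(X_1+Z)-h(X_2+Z)+h(Z)$, whereas you repackage the same computation through the chain rule $I(X_1,X_2;S)=I(X_2;S)+I(X_1;S\mid X_2)=I(X_1;S)+I(X_2;S\mid X_1)$ together with the conditional-independence reductions. The two routes are equivalent in substance, so this is essentially the paper's argument.
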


\begin{IEEEproof}$I(X_2;X_1+X_2+Z)-I(X_2;X_2+Z) = h(X_1+X_2+Z)-h(X_1+Z)-h(X_2+Z)+h(Z)= I(X_1;X_1+X_2+Z) - I(X_1;X_1+Z)$.
\end{IEEEproof}

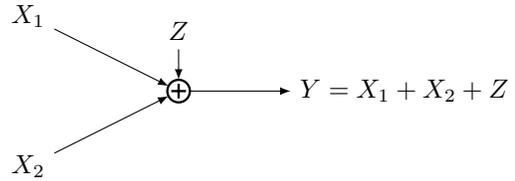
\begin{figure}[t]
\centering
\begin{tikzpicture}[>=latex]
\node(X1)at(0,1){$X_1$};
\node(X2)at(0,-1){$X_2$};
\node(sum)at(5,0){$Y=X_1+X_2+Z$};
\node[dspadder](add)at(2,0){};
\node(Z)at(2,0.8){$Z$};
\draw[->] (X1) --(add);
\draw[->] (X2) --(add);
\draw[->] (Z) --(add);
\draw[->] (add)--(sum);
\end{tikzpicture}
\caption{An illustration of the Fork Lemma.}\label{forkgraph}
\end{figure}

\bigskip

To simplify the derivations in the remainder of the paper, we restrict ourselves the case of a Gaussian {\sf Z}-interference channel with one of the interference parameters (e.g., $b$) equal to zero (Fig.~\ref{GZIC}):
\begin{equation}
\begin{split}
Y_1 &= X_1 + Z \\
Y_2 &= X_2+ \sqrt{a} X_1 + Z.
 \end{split}
\end{equation}
The general determination of corner points will follow in the general case of two-sided interference by noting that removing an interference link can only enlarge the capacity region, as explained in~\cite[Table I]{Costa85}.

\begin{figure}[h]
\centering
\begin{tikzpicture}[xscale=1.3,yscale=1.05,>=latex]
\node(W1)at(0.2,0){$W_1$};
\node(X1)at(1,0){$X_1$};
\node[coordinate](X1')at(1.5,0){};
\node[dspadder](add1)at(4,0){};
\node(Y1)at(5,0){$Y_1$};
\node(W^1)at(5.8,0){$\widehat{W}_1$};

\node(W2)at(0.2,-2){$W_2$};
\node(X2)at(1,-2){$X_2$};
\node[coordinate](X2')at(1.5,-2){};
\node[dspadder](add2)at(4,-2){};
\node(Y2)at(5,-2){$Y_2$};
\node(W^2)at(5.8,-2){$\widehat{W}_2$};

\node(Z)at(4,-1){$Z$};

\draw[->] (X1) 
--(add1);
\draw[->](add1)--(Y1);
\draw[->] (X2) 
--(add2); 
\draw[->](add2)--(Y2);

\node(P1)at(1,-0.5){($P_1$)};
\node(P2)at(1,-1.5){($P_2$)};

\draw[->](X1')--(add2) node[midway,below]{$\sqrt{a}$};

\draw[->] (Z)--(add1) node[midway,right]{($N_1$)};
\draw[->] (Z)--(add2) node[midway,right]{($N_2$)};
\end{tikzpicture}
\caption{Gaussian {\sf Z}-interference channel.}\label{GZIC}
\end{figure}
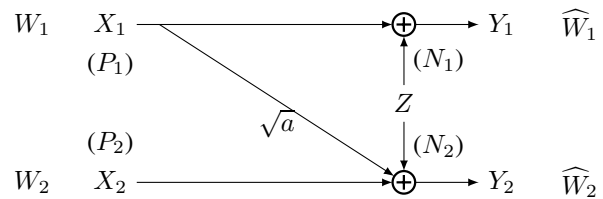

\section{Corner Points Under Strong interference}

The \emph{very strong interference} case ($a \geq 1+P_2/N$) is well-known~\cite{Carleial75}. One has $(C'_1=C_1,C'_2=C_2)$ and in this case there is no need to prove~\eqref{corner}.
For strong interference ($1\leq a\leq 1+P_2/N$) the corner points are known and given by~\eqref{strongcorners} below.
The usual derivation follows from that of the capacity region of the multiple access channel and from the result of Han and Kobayashi~\cite{HanKobayashi81} and Sato~\cite{Sato81}, who showed that both receivers should be able to decode both messages $W_1$ and $W_2$.
We offer a simple proof based on the following lemma.
\begin{lemma}\label{obvious}
Let $X_t\!=\!\sqrt{t}X$ and $Z$ be Gaussian independent of $X$. 
Then $I(X;X_t+Z)$, or $h(X_t+Z)$, is nondecreasing in~$t$.
\end{lemma}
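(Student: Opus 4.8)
The plan is to reduce the statement to the data processing inequality (DPI) together with the scale invariance of mutual information, avoiding any appeal to the I-MMSE relation or to differentiation in~$t$. First I would note that, since $Z$ is independent of $X$, conditioning on $X$ gives $h(X_t+Z\mid X)=h(Z)$, hence
\[
I(X;X_t+Z)=h(X_t+Z)-h(Z),
\]
with $h(Z)$ independent of~$t$. So the two forms of the claim are equivalent, and it suffices to show that $t\mapsto I(X;X_t+Z)$ is nondecreasing. I would phrase everything in terms of mutual information, which depends only on the joint law of its two arguments and is unchanged by any invertible deterministic transformation of either one.

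Fix $0<t\le t'$ (the case $t=0$ is trivial since $I(X;X_0+Z)=I(X;Z)=0$). Rescaling the output by the constant $\sqrt{t'/t}\ge 1$ leaves mutual information unchanged, so
\[
I(X;X_{t'}+Z)=I\!\left(X;\,\sqrt{t}\,X+\sqrt{t/t'}\,Z\right).
\]
The key step is to view the Gaussian channel $X\mapsto\sqrt{t}\,X+Z$ as a stochastically degraded version of $X\mapsto\sqrt{t}\,X+\sqrt{t/t'}\,Z$. Since $0<t/t'\le 1$ and $Z$ is Gaussian, $Z$ has the same law as $\sqrt{t/t'}\,Z+Z'$ for some Gaussian $Z'$ independent of $(X,Z)$ with covariance $(1-t/t')$ times that of $Z$ (a legitimate covariance, being a nonnegative multiple of one). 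Hence, writing $V=\sqrt{t}\,X+\sqrt{t/t'}\,Z$, the pair $(X,\,V+Z')$ has the same joint distribution as $(X,\,X_t+Z)$, while $X\to V\to V+Z'$ is a Markov chain by construction.

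It then remains to apply the DPI to this chain and combine it with the rescaling identity:
\[
I(X;X_t+Z)=I(X;V+Z')\le I(X;V)=I\!\left(X;\,\sqrt{t}\,X+\sqrt{t/t'}\,Z\right)=I(X;X_{t'}+Z),
\]
and adding the constant $h(Z)$ yields the same monotonicity for $h(X_t+Z)$. I do not expect a serious obstacle, but the one point requiring care is precisely this degradation step: the Markov chain must be obtained through the auxiliary output $V+Z'$, whose joint law with $X$ matches that of $(X,X_t+Z)$, rather than by coupling the two channels through a single copy of $Z$ --- the latter is a deterministic function of $X$ and $V$, and hence does \emph{not} form a Markov chain. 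Gaussianity of $Z$ enters only through the divisibility $Z\sim\sqrt{t/t'}\,Z+Z'$.
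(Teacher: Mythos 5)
Your proof is correct and uses essentially the same argument as the paper's: scale invariance of mutual information to normalize the signal coefficient, Gaussian divisibility to exhibit the noisier channel as a physical degradation of the less noisy one, and the DPI to conclude. The paper phrases it via the substitution $u=1/t$ and an arbitrarily small noise increment $\delta>0$ rather than comparing two explicit scales $t\le t'$, but the underlying ingredients and chain of reasoning are identical.
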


\begin{IEEEproof}
Let $u=\frac{1}{t}$, $Z_u\!=\!\sqrt{u}Z$ so that $I(X;X_t+Z)=I(X;X+Z_u)$ and let $Z'$ be an independent copy of $Z$. By the DPI and the divisibility property of the Gaussian, $\forall\delta>0$, $I(X;X+Z_u) \geq I(X;X+Z_u+Z'_\delta) = I(X;X+Z_{u+\delta})$.
\end{IEEEproof}

\begin{proposition}\label{propstrong}
For the strong {\sf Z}-interference Gaussian channel,
\begin{subequations}\label{strongcorners}
\begin{align}
C'_1
&=  \frac{1}{2}\log\Bigl(1+\frac{aP_1+P_2}{N}\Bigr)  - C_2\notag\\
&= \frac{1}{2}\log\Bigl(1+\frac{a P_1}{P_2+N}\Bigr) \\
C'_2
&= \frac{1}{2}\log\Bigl(1+\frac{aP_1+P_2}{N}\Bigr)  - C_1 \notag\\
&= \frac{1}{2}\log\Bigl(1+\frac{(a-1)P_1+P_2}{P_1+N}\Bigr).
\end{align}
\end{subequations}
\end{proposition}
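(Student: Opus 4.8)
The plan is to prove the two formulas \eqref{strongcorners} by establishing the converse \eqref{corner} (achievability being assumed), working in the simplified {\sf Z}-interference model and using Lemma~\ref{maxrate} together with Lemma~\ref{obvious}. I would start with the corner point $(C_1',C_2)$, characterized by \eqref{corner2}: assume $nR_2 \geqa nC_2$ and deduce $nR_1 \leqa nC_1'$. Since $Y_2 = X_2 + \sqrt{a}X_1 + Z$ and $C_2 = \frac12\log(1+P_2/N)$, the hypothesis $nR_2\geqa nC_2$ forces (by the symmetric version of Lemma~\ref{maxrate}) that $X_2+Z$ is AG and that adding the interference $\sqrt{a}X_1$ to $Y_2$ is almost lossless w.r.t.\ $X_2$; equivalently, all the inequalities in the symmetric chain \eqref{directconverse} are tight to within $n\epsilon(n)$. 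In particular $h(X_2+Z)\geqa h(X_2^G+Z)$, and combined with MaxEnt this pins $h(X_2+Z) = h(X_2^G+Z) + n\epsilon(n)$.

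Next I would bound $nR_1$ from above. By Fano and two applications of DPI exactly as in \eqref{directconverse}, $nR_1 \leqa I(X_1;Y_1) = I(X_1;X_1+Z) = h(X_1+Z) - h(Z)$. The point is to show $h(X_1+Z) \leqa \frac12\log\bigl((2\pi e)(P_2+N+aP_1)\bigr) \cdot n / n$, i.e.\ $h(X_1+Z)$ is almost the entropy of a Gaussian of power $(aP_1 + P_2 + N)/a$ scaled appropriately, so that $nR_1 \leqa n\cdot\frac12\log\bigl(1 + aP_1/(P_2+N)\bigr)$. To get this, I would use the tightness of the receiver-2 chain: decoder~2 almost recovers $W_2$, so after onion-peeling $X_2$ almost recovers $\sqrt{a}X_1+Z$ from $Y_2$, hence $h(\sqrt{a}X_1 + Z \mid \text{side info})$ is controlled; more directly, since $X_2+Z$ is AG and the addition of $\sqrt aX_1$ is AL w.r.t.\ $X_2$, the Fork Lemma (Lemma~\ref{fork}) gives that $X_2 + \sqrt{a}X_1 + Z$ is AL compared to $\sqrt{a}X_1 + Z$ w.r.t.\ $X_1$, i.e.\ $I(X_1;\sqrt{a}X_1+X_2+Z) \geqa I(X_1;\sqrt aX_1+Z)$. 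Then $h(\sqrt aX_1 + X_2 + Z) \leq \frac12\log\bigl((2\pi e)(aP_1+P_2+N)\bigr)n$ by MaxEnt, and unwinding the mutual informations together with $h(X_2+Z)\geqa h(X_2^G+Z)$ yields $I(X_1;\sqrt aX_1 + Z)\leqa \frac12\log\bigl(1 + aP_1/(P_2+N)\bigr)n$; by Lemma~\ref{obvious} (monotonicity of $I(X_1;\sqrt t X_1 + Z)$ in $t$, taking $t=a\geq1$) this dominates $I(X_1;X_1+Z)$, hence $nR_1\leqa nC_1'$.

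For the second corner point $(C_1,C_2')$, characterized by \eqref{corner1}, I would assume $nR_1\geqa nC_1$. Here $Y_1 = X_1+Z$ carries no interference, so Lemma~\ref{maxrate} gives directly that $X_1+Z$ is AG, i.e.\ $h(X_1+Z)\geqa h(X_1^G+Z)$, and also $nR_1 = nC_1 + n\epsilon(n)$. Now I bound $nR_2$: by Fano and DPI, $nR_2\leqa I(X_2;Y_2) = h(Y_2) - h(\sqrt aX_1+Z) = h(\sqrt aX_1+X_2+Z) - h(\sqrt aX_1+Z)$. The numerator is at most $\frac12\log\bigl((2\pi e)(aP_1+P_2+N)\bigr)n$ by MaxEnt. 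For the denominator, I use that $X_1+Z$ is AG: scaling, $h(\sqrt aX_1 + Z) = h(\sqrt a(X_1 + Z/\sqrt a)) \geqa$ the corresponding Gaussian entropy; more carefully, since $a\geq 1$ we write $\sqrt aX_1+Z = \sqrt aX_1 + Z' + Z''$ with $Z'$ Gaussian of power $N$ and $Z''$ independent Gaussian of power $(a-1)$-times a unit... — the clean route is: $h(\sqrt aX_1 + Z)\geq h(X_1 + Z) + \frac12\log a \cdot n \cdot$ (scaling by $\sqrt a$ changes differential entropy of the $\R^n$ vector by $\frac n2\log a$, but one must account that only $X_1$ is scaled, not $Z$), so instead I invoke the EPI: $h(\sqrt aX_1+Z)$ with $X_1+Z$ AG of power $P_1+N$ gives $e^{2h(\sqrt aX_1+Z)/n}\geq a\,e^{2h(X_1)/n} + e^{2h(Z)/n}\geqa 2\pi e(aP_1+N)$, so $h(\sqrt aX_1+Z)\geqa \frac12\log\bigl((2\pi e)(aP_1+N)\bigr)n$. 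Subtracting, $nR_2\leqa \frac n2\log\frac{aP_1+P_2+N}{aP_1+N} = \frac n2\log\bigl(1 + \tfrac{P_2}{aP_1+N}\bigr)$ — which must then be algebraically reconciled with the claimed $C_2' = \frac12\log\bigl(1+\tfrac{(a-1)P_1+P_2}{P_1+N}\bigr)$; since we also know $C_1' + C_2 = C_1 + C_2' = \frac12\log\bigl(1+\tfrac{aP_1+P_2}{N}\bigr)$ in the strong regime, the two expressions for $C_2'$ agree by a direct identity, so I would present the bound in the additive form $nR_1 + nR_2 \leqa \frac n2\log\bigl(1+\tfrac{aP_1+P_2}{N}\bigr)$ — obtained by summing $nR_1\leqa nC_1$ and the bound on $nR_2$, using $h(X_1+Z)\geqa h(X_1^G+Z)$ to merge terms — and then subtract $nR_1\geqa nC_1$.

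The main obstacle I anticipate is handling the scaling by $\sqrt a$ cleanly: the EPI and the AG/AL machinery are stated for sums $X+Z$, not for $\sqrt aX + Z$, and getting the right constant requires either the weak-EPI step $e^{2h(\sqrt aX_1+Z)/n}\geq a\,e^{2h(X_1)/n}+e^{2h(Z)/n}$ with the AG hypothesis converting $e^{2h(X_1)/n}$ to $2\pi e P_1$, or a change of variables that rescales the noise instead (as in the proof of Lemma~\ref{obvious}) and then tracks how the power constraint and the almost-inequalities transform. Keeping the $n\epsilon(n)$ bookkeeping consistent through these rescalings — especially making sure the same $\epsilon(n)$ absorbs the MaxEnt gaps, the Fano gap, and the EPI slack — is the delicate part; everything else is the routine chain \eqref{directconverse} plus the already-proven Lemmas~\ref{obvious} and~\ref{fork}.
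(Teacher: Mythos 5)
Your treatment of the corner point $(C'_1,C_2)$ (assuming $nR_2\geqa nC_2$) is essentially the paper's argument: invoke the symmetric Lemma~\ref{maxrate} to get $X_2+Z$ AG and the AL property, pass through the Fork Lemma to transfer the AL statement to $X_1$, bound $h(Y_2)$ by MaxEnt, and finish with Lemma~\ref{obvious} to bridge $I(X_1;X_1+Z)$ and $I(X_1;\sqrt aX_1+Z)$. That part is fine.

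The treatment of the other corner point $(C_1,C'_2)$ has a genuine gap. You invoke the EPI in the form $e^{2h(\sqrt aX_1+Z)/n}\geq a\,e^{2h(X_1)/n}+e^{2h(Z)/n}$ and then replace $e^{2h(X_1)/n}$ by $2\pi e P_1$ up to $\epsilon(n)$ terms. But the hypothesis $nR_1\geqa nC_1$ only yields, via Lemma~\ref{maxrate}(a), that $X_1+Z$ is AG; it says \emph{nothing} about $h(X_1)$ itself. As Remark~\ref{contdens} in the paper emphasizes, $X_1$ is typically a discrete code distribution, for which $h(X_1)=-\infty$ and $e^{2h(X_1)/n}=0$; the EPI then only gives $h(\sqrt aX_1+Z)\geq h(Z)$, which is useless. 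In fact your target bound $nR_2\leqa\tfrac n2\log\tfrac{aP_1+P_2+N}{aP_1+N}$ is strictly \emph{smaller} than $nC'_2=\tfrac n2\log\tfrac{aP_1+P_2+N}{P_1+N}$ when $a>1$, which would contradict achievability of $(C_1,C'_2)$ — a red flag that the EPI step is over-claiming. The paper avoids this entirely by not scaling $X_1$ inside an EPI at all: it bounds $h(\sqrt aX_1+Z)\geq h(X_1+Z)$ directly from Lemma~\ref{obvious} (monotonicity in $t$, using $a\geq1$), and only \emph{then} applies AG of $X_1+Z$ in the form $h(X_1+Z)\geqa h(X^G_1+Z)=h(Z)+nC_1$, giving $nR_2\leqa h(Y_2)-h(Z)-nC_1\leq nC'_2$. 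The ``additive form'' you mention at the end is the right reorganization of the final algebra, but it still needs the Lemma~\ref{obvious} step for the middle inequality; the EPI detour cannot be repaired without assuming AG of $X_1$ itself, which is not available.
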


\begin{IEEEproof}[Proof of Proposition~\ref{propstrong}]
First suppose that $nR_1\geqa nC_1$. From Lemma~\ref{maxrate}, $X_1+Z$ is AG. Therefore, from~\eqref{HW}--\eqref{DPI} where index~1 is replaced by~2,
\begin{subequations}
\begin{align}
nR_2 &\leqa I(X_2;Y_2) \\
&= h(Y_2) -h(\sqrt{a}X_1\!+\!Z)\\
&\leq h(Y_2) -h(X_1+Z) &\text{\llap{(Lemma~\ref{obvious})}}\\
&\leqa h(Y_2)  - h(Z) - nC_1 & \text{(AG)}\\
&\leq nC'_2 &\text{(MaxEnt)}
\end{align}
\end{subequations}
which proves that $nR_2\leqa nC'_2$ (cf.~\eqref{corner1}).

Next suppose that $nR_2\geqa nC_2$. From Lemma~\ref{maxrate} written for transmission $2$, $X_2+Z$ is AG and $aX_1+X_2+Z$ is AL compared to $X_2+Z$ w.r.t. $X_2$. Since $a\ne 0$, by Lemma~\ref{fork}, $aX_1+X_2+Z$ is AL  compared to $aX_1+Z$ w.r.t.~$X_1$.
Therefore, from~\eqref{HW}--\eqref{DPI},
\begin{subequations}
\begin{align}
nR_1 &\leqa I(X_1;Y_1)
=I(X_1;X_1+Z)\\
&\leq I(X_1;\sqrt{a}X_1+Z) &\text{\llap{(Lemma~\ref{obvious})}}\\
&\leqa I(X_1;\sqrt{a}X_1+X_2+Z) & \text{(AL)}\\
&=h(Y_2) -h(X_2+Z) \\
&\leqa h(Y_2) -h(Z) -nC_2 &\text{(AG)}\\
&\leq nC'_1 &\text{(MaxEnt)}
\end{align}
\end{subequations}
which proves that $nR_1\leq nC'_1$ (cf.~\eqref{corner2}).
\end{IEEEproof}

\section{Sato's Corner  Point}

For weak interference $a< 1$, Sato~\cite{Sato78} (see also~\cite{Sason04}) has found that the first corner point is given by~\eqref{satocorner} below.
The usual derivation follows from the equivalence between Gaussian {\sf Z}-interference channel and a ``fully'' degraded version proved in~\cite{Costa85}, the fact that it can be considered as a broadcast channel with input power given by $P_1+P_2$~\cite{Sato78}, and the derivation of the capacity region of the Gaussian (degraded) broadcast channel by Bergmans~\cite{Bergmans74}. 
We give a simple proof based on the following lemma which is a direct consequence of the EPI.

\begin{lemma}\label{AWGZ}
Let $X_t\!=\!\sqrt{t}X$ and $Z$ be Gaussian independent of $X$. If $X+Z$ is AG then so is $X_t+Z$ for any $0<t< 1$.
\end{lemma}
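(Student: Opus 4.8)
The plan is to use the entropy power inequality (EPI) in the direction that bounds $h(X_t+Z)$ from below in terms of $h(X+Z)$, exploiting the divisibility of the Gaussian. Write $N(\cdot) = \frac{1}{2\pi e}e^{\frac{2}{n}h(\cdot)}$ for the entropy power (per component). Since $0<t<1$, I would like to express $X+Z$ as a sum involving $X_t$ plus an independent Gaussian. The natural decomposition is $X+Z \stackrel{d}{=} \frac{1}{\sqrt t}\bigl(X_t + \sqrt t\,Z'\bigr)$ where $Z'$ is an independent copy of $Z$; but a cleaner route is to write, using that $Z$ is Gaussian of variance $N$ and $tN \le N$,
\begin{equation*}
X+Z \;\stackrel{d}{=}\; X + Z_1 + Z_2,
\end{equation*}
where $Z_1,Z_2$ are independent Gaussians with variances $tN$ and $(1-t)N$ respectively, independent of $X$. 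Then $X+Z_1 \stackrel{d}{=} \frac{1}{\sqrt t}(X_t + \sqrt t\,\widetilde Z)$ with $\widetilde Z$ Gaussian of variance $N$, so $h(X+Z_1) = h(X_t+Z) - \frac{n}{2}\log t$, i.e. $N(X+Z_1) = \frac{1}{t}N(X_t+Z)$.

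Next I would apply the EPI to $X+Z = (X+Z_1) + Z_2$: since $Z_2$ is Gaussian of variance $(1-t)N$,
\begin{equation*}
N(X+Z) \;\ge\; N(X+Z_1) + N(Z_2) \;=\; \tfrac{1}{t}\,N(X_t+Z) + (1-t)N.
\end{equation*}
On the other hand, the AG hypothesis on $X+Z$ says $N(X+Z) \le N(X^G+Z) + \epsilon(n) = N(X^G) + N + \epsilon(n)$ — here I use that $h(X+Z)\geqa h(X^G+Z)$ translates to an almost-inequality on entropy powers because the exponential is locally Lipschitz and the powers are bounded (by the power constraint $P+N$), so $n\epsilon(n)$ in the entropy becomes $\epsilon(n)$ in $N(\cdot)$. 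Rearranging the two displays gives
\begin{equation*}
\tfrac{1}{t}\,N(X_t+Z) \;\le\; N(X^G) + N - (1-t)N + \epsilon(n) \;=\; N(X^G) + tN + \epsilon(n),
\end{equation*}
hence $N(X_t+Z) \le tN(X^G) + t^2 N + t\,\epsilon(n) = N(X_t^G) + N - (1-t)\bigl(N - tN(X^G)\bigr) + t\epsilon(n)$. Wait — more directly: $N(X_t^G) = tN(X^G)+N$, and we want $N(X_t+Z) \le N(X_t^G)+\epsilon(n)$, i.e. $tN(X^G)+tN \le tN(X^G)+N$, which holds since $t\le 1$. Translating the entropy-power bound back through the (locally Lipschitz, bounded-argument) map $N\mapsto \frac{n}{2}\log(2\pi e N)$ yields $h(X_t+Z)\geqa h(X_t^G+Z)$, i.e. $X_t+Z$ is AG.

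The main obstacle is the bookkeeping of the almost-inequalities under the nonlinear change of variables between differential entropy and entropy power: one must check that the $n\epsilon(n)$ slack in $h$ stays an $n\epsilon(n)$ slack in $h$ after multiplying entropy powers by constants and adding $N(Z_2)$, which requires that all entropy powers in play lie in a fixed compact interval bounded away from $0$ and $\infty$ — this follows from the power constraint (upper bound $P+N$) and from $N(\cdot)\ge N(Z_2)=(1-t)N>0$ for the lower bound. Everything else is a one-line application of the EPI together with Gaussian divisibility, exactly in the spirit of the proof of Lemma \ref{obvious}.
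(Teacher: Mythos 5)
Your overall instinct — use the EPI plus Gaussian divisibility to pass from $X+Z$ being AG to $X_t+Z$ being AG — is sound, and is indeed mentioned in the paper's remark after the lemma as an alternative to the paper's own proof (which instead uses the DPI for divergence, a strictly weaker tool). However, your execution has a genuine error that makes the argument fail.

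The distributional identity you assert is false. You write $X+Z_1 \stackrel{d}{=} \tfrac{1}{\sqrt t}(X_t+\sqrt t\,\widetilde Z)$ with $\widetilde Z$ of variance $N$, but the right-hand side simplifies to $X+\widetilde Z$, which has noise variance $N$, while the left-hand side has noise variance $tN<N$: these are different distributions for $t<1$. Consequently the claimed relation $N(X+Z_1)=\tfrac{1}{t}N(X_t+Z)$ is wrong. The correct scaling identity is
\begin{equation*}
h(X_t+Z)=\tfrac{n}{2}\log t+h\bigl(X+Z/\sqrt t\,\bigr),\qquad\text{i.e.}\qquad N(X_t+Z)=t\,N\bigl(X+\widetilde Z_{N/t}\bigr),
\end{equation*}
where $\widetilde Z_{N/t}$ is Gaussian of variance $N/t$, which is \emph{larger} than $N$ since $t<1$. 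So the decomposition must go the opposite way from what you wrote: one must \emph{add} Gaussian noise to $X+Z$, writing $X+\widetilde Z_{N/t}\stackrel{d}{=}(X+Z)+Z''$ with $Z''$ Gaussian of variance $N(1-t)/t>0$, and then apply the EPI to get $N\bigl((X+Z)+Z''\bigr)\ge N(X+Z)+N(Z'')\ge N(X^G+Z)-\epsilon(n)+N(Z'')$, whence $N(X_t+Z)\geq tP+N-t\epsilon(n)=N(X_t^G+Z)-t\epsilon(n)$. Your decomposition $X+Z=(X+Z_1)+Z_2$ with $Z_1$ of variance $tN$ cannot be made to relate to $X_t+Z$ by any rescaling.

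A secondary problem is that you invert the direction of the AG hypothesis: you write ``$N(X+Z)\le N(X^G+Z)+\epsilon(n)$'' as if that were the AG hypothesis, but AG is $h(X+Z)\geqa h(X^G+Z)$, which is a \emph{lower} bound on $N(X+Z)$ (the upper bound is MaxEnt, which holds always and carries no information). With the inequality direction you use, you end up deriving an \emph{upper} bound on $N(X_t+Z)$ — which is the trivial MaxEnt direction and does not establish AG of $X_t+Z$. For comparison, the paper's own proof sidesteps the EPI entirely: it writes $h(X_t^G+Z)-h(X_t+Z)=h(X^G+Z_{1/t})-h(X+Z_{1/t})=D(X+Z+Z'_{1/t-1}\,\|\,X^G+Z+Z'_{1/t-1})$ and applies the data-processing inequality for divergence, $D(X+Z+Z'\,\|\,X^G+Z+Z')\le D(X+Z\,\|\,X^G+Z)$, which gives exactly $h(X^G+Z)-h(X+Z)$ — no entropy-power bookkeeping is needed, and the almost-inequality transfers without any Lipschitz argument.
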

\begin{IEEEproof}
Let $u=1/t>1$, $Z_u=\sqrt{u}Z$ and let $Z'$ be an independent copy of $Z$. By the DPI for divergence and the divisibility property of the Gaussian,
$h(X^G_t +Z)-h(X_t+Z) = h(X^G + Z_u)-h(X+Z_u) = h(X^G+Z+Z'_{u-1})-h(X+Z+Z'_{u-1})=D(X+Z+Z'_{u-1}\|X^G+Z+Z'_{u-1})\leq D(X+Z\|X^G+Z)=h(X^G+Z)-h(X+Z)$.
\end{IEEEproof}

\begin{remark}
By noting that $X$ is AG if and only if its entropy power $N(X)$ satisfies $N(X) \geq N(X^G) -\epsilon(n)$, it is readily seen 
that the general EPI $N(X+Y)\geq N(X)+N(Y)$ for independent $X,Y$ implies that if $X$ and $Y$ are AG, then so is $X+Y$~\cite{RioulCosta15}.  Thus the conclusion of Lemma~\ref{AWGZ} is also obtained using the EPI where one of the variables is Gaussian: $N(X+Z)\geq N(X)+N(Z)$.  

It is interesting to note, however, that the EPI is not even required: only the DPI applied to divergence was necessary in the above proof, which is strictly weaker than the EPI. In fact, $D(X+Z\|X^G+Z)\leq D(X\|X^G)$ is equivalent to $N(X+Z) \geq N(X) + N(Z) \cdot \bigl(N(X)/N(X^G)\bigr)$ where $N(X)/N(X^G)\leq 1$.
\end{remark}

\begin{proposition}
For the  weak {\sf Z}-interference  Gaussian channel,
\begin{equation}\label{satocorner}
C'_2=\frac{1}{2}\log \Bigl(1+ \frac{P_2}{aP_1+N}\Bigr).
\end{equation}
\end{proposition}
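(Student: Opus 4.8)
The goal is the converse implication \eqref{corner1}: under the hypothesis $nR_1\geqa nC_1$, deduce $nR_2\leqa nC'_2$. The whole argument rests on two facts from the preliminaries. First, Lemma~\ref{maxrate}(a) says that $X_1+Z$ is AG; moreover, as discussed below, the same hypothesis forces the power of $X_1$ to be essentially $P_1$. Second, since $0<a<1$, Lemma~\ref{AWGZ} transfers the AG property from $X_1+Z$ to the scaled vector $\sqrt a X_1+Z$, so that
\begin{equation}
h(\sqrt a X_1+Z)\geqa h(\sqrt a X^G_1+Z)=\tfrac n2\log\bigl(2\pi e(aP_1+N)\bigr).
\end{equation}
This is the only role the hypothesis plays, and it supplies exactly the lower bound on the second user's conditional output entropy that the corner point requires.

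With this in hand, run the single-user converse for user~2, i.e. the analogue of \eqref{HW}--\eqref{DPI} with index~$1$ replaced by~$2$: $nR_2\leqa I(X_2;Y_2)$. In the {\sf Z}-channel $X_2$ is independent of $(X_1,Z)$, so $I(X_2;Y_2)=h(Y_2)-h(Y_2\mid X_2)=h(Y_2)-h(\sqrt a X_1+Z)$. Bound $h(Y_2)$ by MaxEnt using the power of $Y_2=\sqrt a X_1+X_2+Z$, i.e.\ $h(Y_2)\leq\tfrac n2\log(2\pi e(aP_1+P_2+N))$, and bound $h(\sqrt a X_1+Z)$ from below by the displayed AG estimate. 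Subtracting,
\begin{equation}
nR_2\leqa \tfrac n2\log\frac{aP_1+P_2+N}{aP_1+N}=\tfrac n2\log\Bigl(1+\frac{P_2}{aP_1+N}\Bigr)=nC'_2,
\end{equation}
which is \eqref{satocorner}.

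The one delicate point --- and the place where a careless argument would slip --- is the claim that $X^G_1$ may be taken with the full power $P_1$: if instead $\tfrac1n\E\|X_1\|^2=P_1'<P_1$, then $h(\sqrt a X^G_1+Z)=\tfrac n2\log(2\pi e(aP_1'+N))$ is a strictly weaker lower bound (smaller denominator $aP_1'+N$), and the chain does not close. This is resolved exactly as in the proof of Lemma~\ref{maxrate}: $nR_1\geqa nC_1$ makes every inequality in \eqref{directconverse} tight to within $n\epsilon(n)$, so in particular the MaxEnt step \eqref{maxent} is tight, giving $h(X_1+Z)\geqa\tfrac n2\log(2\pi e(P_1+N))$; comparing with the MaxEnt upper bound $h(X_1+Z)\leq\tfrac n2\log(2\pi e(P_1'+N))$ forces $P_1'\geq P_1-\epsilon(n)$, hence $P_1'\to P_1$ and all the estimates above hold as written once the resulting $o(1)$ losses are absorbed into $\epsilon(n)$. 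It is worth noting that, unlike the strong-interference case, this derivation uses neither the Fork Lemma nor any AL property --- only the AG property of $X_1+Z$, propagated through the scaling Lemma~\ref{AWGZ}.
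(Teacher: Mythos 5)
Your argument is correct and follows essentially the same route as the paper: Lemma~\ref{maxrate}(a) gives that $X_1+Z$ is AG, Lemma~\ref{AWGZ} scales this to $\sqrt{a}X_1+Z$, and the Fano/DPI chain for user~2 plus MaxEnt close the argument. What you add, and what the paper leaves implicit, is the power-saturation step. You correctly observe that if $X_1$ had average power $P_1'<P_1$, the chain would end at $\tfrac{n}{2}\log\bigl(1+\tfrac{P_2}{aP_1'+N}\bigr)$, which can exceed $nC'_2$; since $X_1^G$ is defined to have the \emph{actual} variance of $X_1$ (not necessarily $P_1$), the paper's final ``MaxEnt'' step is not immediate as written. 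Your fix --- reading off $h(X_1+Z)\geqa\tfrac{n}{2}\log\bigl(2\pi e(P_1+N)\bigr)$ from the tightness of \eqref{maxent} and comparing with the MaxEnt upper bound on $h(X_1+Z)$ to force $P_1'\to P_1$ --- is exactly the right way to close this, and it is the same argument already implicit in the proof of Lemma~\ref{maxrate}. So: same approach, with a welcome explicit treatment of a detail the paper glosses over.
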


\begin{IEEEproof}
Suppose that $nR_1\geqa nC_1$. From Proposition~\ref{maxrate}, $X_1+Z$ is AG. By Lemma~\ref{AWGZ}, $\sqrt{a}X_1+Z$ is also AG. Therefore, from~\eqref{HW}--\eqref{DPI} written for $i=2$,
\begin{subequations}
\begin{align}
nR_2 &\leqa I(X_2;Y_2)
= h(Y_2) -h(\sqrt{a}X_1+Z)\\
&\leqa h(Y_2) - h(\sqrt{a}X_1^G+Z) &\text{(AG)}\\
&\leq nC'_2 &\text{\llap{(MaxEnt)}}
\end{align}
\end{subequations}
which proves that $nR_2\leqa nC'_2$ (cf.~\eqref{corner1}).
\end{IEEEproof}

\section{Almost Linear Dependence}

For any two (zero-mean) $n$-dimensional random vectors $U,V$ with finite average powers  we define their \emph{correlation coefficient} by
\begin{equation}
\rho(U,V) = \frac{\E \{ U\cdot V \} }{ \sqrt{\E{\|U\|^2\}\E{\|V\|^2\}}}}}
\end{equation}
where $\cdot$ denotes the scalar product. By Cauchy-Schwarz inequality\footnote{%
This particular instance of Cauchy-Schwarz inequality can be proved by considering the discriminant of the nonnegative quadratic form $\lambda\mapsto \E\{\|U+\lambda V\|^2\}$. Alternatively, one has $|\E \{ U\cdot V \}| \leq \sum_{i=1}^n |\E\{U_iV_i\}| \leq 
\sum_{i=1}^n \sqrt{\E{|U_i|^2\}}}\sqrt{\E{|V_i|^2\}}}\leq \sqrt{\E{\|U\|^2\}\E{\|V\|^2\}}}}$
where the Cauchy-Schwarz inequality is applied twice (for random variables and for vectors).
}
one has $|\rho(U,V)|\leq 1$ with equality if and only if $U$ and $V$ are \emph{linearly dependent} in the sense that $U=\lambda V$ a.e. for some $\lambda\in\R$.

\begin{definition}[ALD property]
We say that $U$ and $V$ are \emph{almost linearly dependent} (ALD) if
\begin{equation}
1-|\rho(U,V)| \leq \epsilon(n). 
\end{equation}
(Recall that $\epsilon(n)$ denotes {any} positive function of $n$ which tends to $0$ as $n\to+\infty$.)
\end{definition}

We now consider $Y=X_2+Z$ of variance $Q\leq P_2+N$ and the interference term $X=\sqrt{a} X_1$. 

\begin{remark}\label{contdens}
Since $Z$ is Gaussian, it is proven in~\cite[App.~II.A]{GengNair14} that $Y=X_2+Z$ has a \emph{continuous} density (see also~\cite[Lemma~1]{Rioul11}\footnote{%
In fact, the density of $Y$ is indefinitely differentiable, bounded, positive, tends to zero at infinity and all its derivatives are also bounded and tend to zero at infinity~\cite[App. B]{Rioul16}; but we shall not need this result here.
}).  Similarly $X+Y=(\sqrt{a} X_1+X_2)+Z$ also has a continuous density. In contrast, $X$ is proportional to a code distribution  that is typically discrete.
\end{remark}

Clearly $Y^G=X_2^G+Z$ satisfies the inequality $h(Y) \leq h(Y^G)$. However, the interference term $X$ might very well be such that the opposite inequality $h(X+Y) \geq h(X+Y^G)$ holds after addition. We now aim at bounding the difference $h(X+Y) - h(X+Y^G)$.

\begin{lemma}\label{truc} One has
\begin{equation}
 h(X+Y) - h(X+Y^G) \leq c \cdot n \cdot \sqrt{1-\rho(Y,Y^G)}
\end{equation}
where $c$ is a constant (independent of $n$).
\end{lemma}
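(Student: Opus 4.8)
The plan is to control the entropy difference by a coupling argument: since $Y^G$ and $Y$ have the same second moments, the quantity $1-\rho(Y,Y^G)$ measures how close $Y$ is (in mean-square) to a suitably rotated/scaled copy of $Y^G$, and adding the \emph{same} $X$ to two nearby vectors cannot change differential entropy by much, provided the densities involved are well-behaved. Concretely, I would first produce a joint distribution (coupling) of $Y$ and $Y^G$ that is nearly supported on the diagonal, i.e.\ with $\tfrac1n\E\|Y-Y^G\|^2$ small; the natural choice is to let $Y^G$ be the orthogonal projection plus independent Gaussian completion, which gives $\tfrac1n\E\|Y-Y^G\|^2 = O\bigl(1-\rho(Y,Y^G)\bigr)$ after using $\E\|Y\|^2=\E\|Y^G\|^2=nQ$ and expanding the square. (A small technical point: $\rho$ may be negative, but since $Y^G$ is symmetric one may replace it by $-Y^G$ without loss, so only $1-|\rho|$, hence $1-\rho$ after this sign fix, matters; the statement is written with $\rho(Y,Y^G)$, which one can take nonnegative.)

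Next I would pass from mean-square closeness of $Y,Y^G$ to entropy closeness of $X+Y,X+Y^G$. Here the key is Remark~\ref{contdens}: both $X+Y$ and $X+Y^G$ have continuous (indeed smooth, bounded) densities because $Z$ is Gaussian, so writing $W=X+Y$ and $\widetilde W=X+Y^G$ we have two random vectors in $\R^n$ that are close in, say, $L^2$-Wasserstein-type sense, and both densities are uniformly bounded (the bound depending only on $N$ and the dimension through the Gaussian smoothing, giving the $1/\sqrt{2\pi N}^{\,n}$ factor). Differential entropy is not Lipschitz in total variation in general, but it \emph{is} controllable when one has an $L^\infty$ bound on the densities and a moment bound: the standard estimate $|h(W)-h(\widetilde W)| \leq$ (total variation) $\times$ (log of density bound) $+$ (continuity of $x\mapsto x\log x$), combined with a bound on total variation in terms of $\E\|W-\widetilde W\|^2$ obtained again via the Gaussian convolution (Pinsker together with the fact that convolving with $Z$ smooths, or directly a heat-semigroup estimate), yields $|h(W)-h(\widetilde W)| \leq c'\, n\, \sqrt{\tfrac1n\E\|W-\widetilde W\|^2}$, and since $W-\widetilde W = Y-Y^G$ this is $c\, n\,\sqrt{1-\rho(Y,Y^G)}$ as claimed. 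The $n$ in front is exactly the extensivity of entropy (each of the $n$ coordinates contributes an $O(1)$ term).

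The main obstacle, and the step I would spend the most care on, is the second one: turning an $L^2$ bound on $Y-Y^G$ into a one-sided bound on $h(X+Y)-h(X+Y^G)$ \emph{with a constant independent of $n$ and independent of the code distribution of $X$}. The danger is that $X$ is essentially arbitrary (typically discrete, per Remark~\ref{contdens}), so one cannot argue about $X$ directly; the argument must go through the fixed Gaussian noise $Z$ hidden inside $Y$ and $Y^G$. The clean way is: write $X+Y = (X+X_2) + Z$ and $X+Y^G = (X + X_2^G) + Z$, where now the ``signal'' parts $X+X_2$ and $X+X_2^G$ differ by $X_2 - X_2^G$ with the same small $L^2$ distance, and then use the fact that the map $S \mapsto h(S+Z)$ (Gaussian $Z$ of power $N$) is Lipschitz with respect to $L^2$ distance of $S$, with Lipschitz constant depending only on $N$ — this follows from de Bruijn's identity / the fact that the score function of $S+Z$ is bounded by $1/N$ times a quantity controlled by Fisher information, or more elementarily from the convolution representation and the smoothness of the Gaussian kernel. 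Checking that this Lipschitz-in-$L^2$ property holds uniformly (and that the per-dimension constant does not blow up) is where the real work lies; everything else is bookkeeping with $\epsilon(n)$.
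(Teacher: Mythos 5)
Your approach --- construct a coupling so that $\tfrac1n\E\|Y-Y^G\|^2$ is controlled by $1-\rho$, then deduce entropy closeness of $X+Y$ and $X+Y^G$ from an $L^2$-Lipschitz/Wasserstein-continuity property of $S\mapsto h(S+Z)$ --- is essentially the Polyanskiy--Wu route, which is precisely the machinery this paper sets out to avoid (the abstract and introduction both stress that the new proof ``sidesteps the notion of Wasserstein distance''). Your route \emph{can} be made to work, but only via its second variant (score-function / Fisher-information control, which is Polyanskiy--Wu's Proposition 1); the first variant you sketch, via total variation plus an $L^\infty$ density bound, does \emph{not} give the right dimensional scaling: $\log\|q\|_\infty = O(n)$ for a product-form Gaussian density, and the total variation between two $n$-dimensional distributions whose means differ by $O(\sqrt{1-\rho})$ per coordinate is typically $\Theta(1)$, not $o(1)$; multiplying these would swamp the target bound. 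You should drop that variant entirely. Also note that only a \emph{one-sided} bound is claimed, and chasing a two-sided $|h(W)-h(\widetilde W)|$ is both unnecessary and strictly harder.

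The paper's actual proof is shorter, more elementary, and does not go through any transport or Fisher-information estimate. It exploits two features you do not use. First, because $Y^G$ is exactly Gaussian, the density $q$ of $X+Y^G$ has the \emph{explicit} Gaussian-mixture form $q(u)\propto\E\exp(-\|u-X\|^2/2Q)$. Second, the per-codeword power constraint gives a \emph{hard} a.s.\ bound $\|X\|\le\sqrt{anP_1}$, not just a second-moment bound. With these in hand, one applies Gibbs' inequality $D(X+Y\,\|\,X+Y^G)\ge 0$, which gives the one-sided estimate
\begin{equation}
h(X+Y)-h(X+Y^G)\;\le\;\E\log\frac{q(X+Y^G)}{q(X+Y)},
\end{equation}
and then bounds the log-ratio \emph{pointwise}: writing $u=X+Y$, $\tilde u=X+Y^G$, the identity $\|u-X\|^2-\|\tilde u-X\|^2=\|u\|^2-\|\tilde u\|^2+2X\cdot(\tilde u-u)$ together with $|X\cdot(\tilde u-u)|\le\sqrt{anP_1}\,\|\tilde u-u\|$ yields $\log\frac{q(\tilde u)}{q(u)}\le\bigl(\|u\|^2-\|\tilde u\|^2+2\sqrt{anP_1}\|\tilde u-u\|\bigr)/2Q$. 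Taking expectations, the quadratic terms cancel (since $\E\|X+Y\|^2=\E\|X+Y^G\|^2$), $\tilde u-u=Y^G-Y$, and Jensen/Cauchy--Schwarz give $\E\|Y^G-Y\|\le\sqrt{2nQ(1-\rho)}$, landing on $h(X+Y)-h(X+Y^G)\le n\sqrt{2aP_1/Q}\,\sqrt{1-\rho(Y,Y^G)}$. No de Bruijn identity, no MMSE, no heat-semigroup or optimal-transport estimate is needed. Your instinct that $1-\rho$ controls $\tfrac1n\E\|Y-Y^G\|^2$ is exactly right and is reused in the paper's proof, but the transfer to entropy is done by a direct Gibbs-plus-Gaussian-kernel computation rather than a Lipschitz property of entropy under Gaussian smoothing.
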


\begin{proof} 
The continuous p.d.f. $q$ of $X+Y^G$  takes the form
\begin{equation}
q(u) = \E \{q(u|X)\} 
=  \frac{  \E \exp \Bigl(- \dfrac{\|u-X\|^2}{2Q}\Bigr) }{(2\pi)^{n/2}Q^n}. 
\end{equation}
Since $D(X+Y\|X+Y^G)\geq 0$, we have
\begin{equation}\label{schmurf}
  h(X+Y) - h(X+Y^G)
  \leq \E \log \frac{q(X+Y^G)}{q(X+Y)}.
\end{equation}
where
\begin{equation}
 \log \frac{q(\tilde{u})}{q(u)}= \log \frac{  \E \exp \Bigl(- \dfrac{\|\tilde{u}-X\|^2}{2Q}\Bigr) }{  \E \exp \Bigl(- \dfrac{\|{u}-X\|^2}{2Q}\Bigr) }.
\end{equation}
Now for any $u\in\R^n$,
$
 \|u-X\|^2 - \|\tilde{u}-X\|^2 = \|u\|^2 - \|\tilde{u}\|^2 +2 X \cdot (\tilde{u}-u)
 \leq \|u\|^2 - \|\tilde{u}\|^2 +2 \sqrt{a n P_1} \|\tilde{u}-u\|$. 
It follows that
\begin{equation}
  \log \frac{q(\tilde{u})}{q(u)} \leq   \frac{\|u\|^2 - \|\tilde{u}\|^2 +2 \sqrt{a n P_1} \|\tilde{u}-u\|}{2Q}
\end{equation}
where the identical terms $\E \exp (- {\|\tilde{u}-X\|^2}/{2Q})$ in the numerator and denominator were cancelled. Plugging this inequality into~\eqref{schmurf} and noting that $X+Y^G-(X+Y)=Y^G-Y$ we obtain
\begin{align}
  h(X+Y) &- h(X+Y^G) \notag\\&\leq  {\E\{ \|X+Y\|^2\}}/{2Q} - {\E\{ \|X+Y^G\|^2\}}/{2Q} \notag \\ + \frac{\sqrt{a n P_1}}{Q} &\sqrt{\E \{\|Y\|^2\} + \E\{\|Y^G\|^2\} - 2 \E\{Y \cdot Y^G\}\}} \label{schglorps}\\
  &=  n \sqrt{{2aP_1/Q}} \cdot \sqrt{1-\rho(Y,Y^G)}
\end{align}
where the first two terms in~\eqref{schglorps} were cancelled.
\end{proof}

The result of Lemma~\ref{truc} shows that if $Y$ and $Y^G$ are ALD such that $1-\rho(Y,Y^G)\leq \epsilon(n)$, then 
$h(X+Y) - h(X+Y^G)\leqa 0$. In other words $h(X+Y^G) - h(X+Y)\geqa 0$ is almost positive: it can be negative, but not by much.
In order to obtain a value $\rho(Y,Y^G)$ close to one, the next lemma shows that is sufficient to assume a dependence of the form $Y=F(Y^G)$ where $F$ is ``almost linear''.

\begin{lemma}\label{NG2G}
One can always assume that $Y=F(Y^G)$ where the change of variable $F$ has a triangular Jacobian matrix $\mathbf{J}$ with positive diagonal elements such that 
\begin{equation}\label{traceeq}
\rho(Y,Y^G) = \frac{1}{n}\E\{ \mathrm{Tr} (\mathbf{J}) \}\geq 0.
\end{equation}
\end{lemma}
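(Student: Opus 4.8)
The plan is to realize $Y=X_2+Z$ as a deterministic increasing rearrangement of $Y^G=X_2^G+Z$, coordinate by coordinate, using the Knothe--Rosenblatt (triangular) transport map, and then to identify the correlation coefficient $\rho(Y,Y^G)$ with the normalized expected trace of its Jacobian.

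First I would recall from Remark~\ref{contdens} that, since $Z$ is Gaussian, both $Y$ and $Y^G$ have continuous (indeed smooth, strictly positive) densities on $\R^n$; in particular all the conditional laws that appear below are nonatomic. This is what makes the construction below possible. Then I would build $F$ recursively: let $F_1$ be the monotone (quantile) coupling of the first marginal of $Y^G$ to the first marginal of $Y$, so $Y_1=F_1(Y^G_1)$ with $F_1$ nondecreasing; and, inductively, having defined $(Y_1,\dots,Y_{k-1})$ as a function of $(Y^G_1,\dots,Y^G_{k-1})$, let $F_k(\,\cdot\mid y^G_1,\dots,y^G_{k-1})$ be the monotone coupling of the conditional law of $Y^G_k$ given $Y^G_{1:k-1}=y^G_{1:k-1}$ onto the conditional law of $Y_k$ given $Y_{1:k-1}=y_{1:k-1}$ (where $y_{1:k-1}$ is the already-constructed image). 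Standard one-dimensional transport theory (the conditional laws being nonatomic with positive densities) gives that each $F_k$ is nondecreasing in its last argument, hence $F=(F_1,\dots,F_n)$ has a lower-triangular Jacobian $\mathbf J$ with nonnegative — in fact a.e. positive — diagonal entries $J_{kk}=\partial F_k/\partial y^G_k\ge 0$. By construction $Y\stackrel{d}{=}F(Y^G)$, so we may and do \emph{assume} $Y=F(Y^G)$ (this is the ``one can always assume'' in the statement: we are only using the joint law through the marginals of $Y$ and $Y^G$ and through expressions that depend on it only via these, as in Lemma~\ref{truc}).

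It remains to prove the trace identity~\eqref{traceeq}. Write $U=Y-\E\{Y\mid\text{off-diagonal structure}\}$... more directly: since $Y$ and $Y^G$ are zero-mean with the \emph{same} covariance-trace-per-coordinate structure only in total power, I compute $\E\{Y\cdot Y^G\}=\sum_k \E\{Y_k Y^G_k\}=\sum_k\E\{F_k(Y^G)\,Y^G_k\}$. Conditioning on $Y^G_{1:k-1}$ and using that, conditionally, $Y^G_k$ has a smooth positive density $p_k$ while $F_k(\cdot\mid y^G_{1:k-1})$ is absolutely continuous, an integration by parts in the variable $y^G_k$ gives
\begin{equation}
\E\{F_k(Y^G)\,Y^G_k\}=\E\Bigl\{\frac{\partial F_k}{\partial y^G_k}(Y^G)\cdot\operatorname{Var}(Y^G_k\mid Y^G_{1:k-1})\Bigr\},
\end{equation}
using the Gaussian-type identity $\E\{g'(W)\}\,\sigma^2=\E\{g(W)\,W\}$ valid when $W$ is Gaussian — and here the key point is that, conditionally on $Y^G_{1:k-1}$, $Y^G_k=X^G_{2,k}+Z_k$ \emph{is} Gaussian (a conditional Gaussian, since $(X_2^G,Z)$ is jointly Gaussian). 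Summing over $k$, the conditional variances telescope/sum to the total, and after dividing by $n\sqrt{\E\|Y\|^2}\sqrt{\E\|Y^G\|^2}$ — which are equal since $Y$ and $Y^G$ have the same power $Q$ — one obtains $\rho(Y,Y^G)=\tfrac1n\E\{\operatorname{Tr}\mathbf J\}$. Nonnegativity is then immediate from $J_{kk}\ge 0$.

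The main obstacle I anticipate is the regularity bookkeeping needed to justify the integration by parts and the conditional-Gaussian identity honestly: one must ensure $F_k$ is (a.e.) differentiable with integrable derivative against the Gaussian conditional density, and that the boundary terms at $\pm\infty$ vanish. This is exactly where Remark~\ref{contdens} and its footnote are invoked — the densities of $Y$ and $Y^G$, and all the relevant conditionals, are smooth, bounded, strictly positive and decay at infinity, so $F_k$ is a genuine smooth increasing bijection of $\R$ for each fixed prefix and the integration by parts is legitimate. I would state this as the one technical lemma to check and otherwise treat the transport construction and the trace computation as the routine core of the argument.
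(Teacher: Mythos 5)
Your proposal is correct and follows essentially the same route as the paper: both construct the Kn\"othe--Rosenblatt triangular transport map $F$ realizing $Y=F(Y^G)$ (using Remark~\ref{contdens} to justify existence), and both then obtain~\eqref{traceeq} from the Gaussian integration-by-parts identity, which the paper cites directly as Stein's lemma and which you re-derive via conditional integration by parts (collapsing to the same thing since $Y^G$ is white, so each conditional law of $Y^G_k$ given $Y^G_{1:k-1}$ is just $N(0,Q)$). The only deviations are cosmetic slips in the write-up (the false start ``$U=Y-\E\{Y\mid\dots\}$'' and a stray factor of $n$ in the normalization), neither of which affects the argument.
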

\noindent Of course, a truly linear dependence of the form $Y=\lambda Y^G$ implies $\lambda=1$ (since $Y$ and $Y^G$ have the same variance), hence $\mathbf{J}=\mathbf{I}$ (identity matrix), in keeping with the fact that $\rho(Y,Y^G)=1$ in this case.

\begin{proof}
The change of variable of this lemma is well known as \emph{Kn\"othe map} in the theory of convex bodies~\cite[p.~126]{MilmanSchechtman86},\cite[p.~312]{Schneider93},~\cite[Thm.~3.4]{GiannopoulosMilman04}, \cite[Thm.~1.3.1]{ArtsteinGiannopoulosMilman15}. For completeness we give Kn\"othe's proof~\cite{Knothe57}. 
By Remark~\ref{contdens},  $Y$ has a continuous density. 
For each $y^G_1\in\R$, define $F_1(y^G_1)$ such that 
\begin{equation}
\int_{-\infty}^{F_1(y^G_1)} p_{Y_1} = \int_{-\infty}^{y^G_1} p_{Y^G_1}.
 \end{equation}
Clearly $F_1$ is increasing and differentiating gives
\begin{equation}
p_{Y_1}(F_1(y^G_1)) \;  \frac{\partial F_1}{\partial y^G_1}(y^G_1) = p_{Y^G_1}(y^G_1)
\end{equation}
which proves the result in one dimension: $Y_1$ has the same distribution as $F_1(Y^G_1)$ where $\dfrac{\partial F_1}{\partial y^G_1}$ is positive.
Next for each $y^G_1,y^G_2$ in $\R$, define  $F_2(y^G_1,y^G_2)$ such that 
\begin{equation}
\int_{-\infty}^{F_2(y^G_1,y^G_2)}\!\!\! p_{Y_1,Y_2}(F_1(y^G_1),\,\cdot\,) \;  \frac{\partial F_1}{\partial y^G_1}(y^G_1) = 
\int_{-\infty}^{y^G_2}\!\! p_{Y^G_1,Y^G_2}(y^G_1,\,\cdot\,)
\end{equation} 
Again $F_2$ is increasing in $y^G_2$ and differentiating gives
\begin{multline}
p_{Y_1,Y_2}(F_1(y^G_1),F_2(y^G_1,y^G_2)) \;  \frac{\partial F_1}{\partial y^G_1}(y^G_1)  \frac{\partial F_2}{\partial y^G_2}(y^G_1,y^G_2) \\=  p_{Y^G_1,Y^G_2}(y^G_1,y^G_2).
\end{multline} 
Continuing in this manner we arrive at
\begin{multline}
\begin{split}
&p_{Y_1,Y_2,\ldots,Y_n}(F_1(y^G_1),F_2(y^G_1,y^G_2),\ldots,F_n(y^G_1,y^G_2,\ldots,y^G_n))
 \\&\qquad\times   \frac{\partial F_1}{\partial y^G_1}(y^G_1)  \frac{\partial F_2}{\partial y^G_2}(y^G_1,y^G_2) \cdots \frac{\partial F_n}{\partial y^G_n}(y^G_1,y^G_2,\ldots,y^G_n) 
 \end{split}
\\=  p_{Y^G_1,Y^G_2,\ldots,Y^G_n}(y^G_1,y^G_2,\ldots,y^G_n) 
\end{multline}
which shows that $Y$ has the same distribution as $F(Y^G)=\bigl(F_1(Y^G_1),F_2(Y^G_1,Y^G_2),\ldots,
F_n(Y^G_1,Y^G_2,\ldots,Y^G_n)\bigr)$. The Jacobian matrix $\mathbf{J}$ of $F$ is triangular with positive diagonal elements are positive since by construction each $F_k$ is increasing in $y^G_k$. For convenience we choose to define $(Y,Y^G)$ such that $Y=F(Y^G)$. 
By Stein's lemma,
\begin{align}
\rho(Y,Y^G) &= \frac{1}{nQ}\sum_{i=1}^n \E(Y^G_i \cdot F_i(Y^G))\\
&=\frac{1}{n} \sum_{i=1}^n \E(\frac{\partial F_i}{\partial y_i^G}(Y^G))= \frac{1}{n}\E\{\mathrm{Tr} (\mathbf{J})\}.\qedhere
\end{align}
\end{proof}

\begin{proposition}\label{therealthing}
If $Y$ is AG, then  $Y$ and $Y^G$ are ALD and 
\begin{equation}
I(X;X+Y^G)\geqa I(X;X+Y).
\end{equation}
\end{proposition}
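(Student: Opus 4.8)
The plan is to chain together Lemma~\ref{NG2G}, Lemma~\ref{truc}, and a Fisher-information / MMSE style argument to control the entropy gap, and then convert the statement about entropies into the advertised statement about mutual informations. First I would invoke Lemma~\ref{NG2G} to realize the coupling $Y = F(Y^G)$ with $F$ a Knöthe map whose Jacobian $\mathbf{J}$ is triangular with positive diagonal, so that $\rho(Y,Y^G) = \frac{1}{n}\E\{\mathrm{Tr}(\mathbf{J})\} \geq 0$. The goal is then to show that the AG hypothesis $h(Y)\geqa h(Y^G)$ forces $\rho(Y,Y^G)$ to be close to $1$, i.e. $1-\rho(Y,Y^G)\leq\epsilon(n)$, which is exactly the ALD conclusion.

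The key quantitative step is to relate the entropy deficit $h(Y^G)-h(Y)\geq 0$ to the trace deficit $n - \E\{\mathrm{Tr}(\mathbf{J})\}$. By the change-of-variables formula, $h(Y) = h(Y^G) + \E\{\log\det\mathbf{J}\} = h(Y^G) + \E\{\sum_{i=1}^n \log J_{ii}\}$, where $J_{ii} = \partial F_i/\partial y_i^G > 0$. Hence $h(Y^G) - h(Y) = -\E\{\sum_i \log J_{ii}\} = \E\{\sum_i (- \log J_{ii})\}$. Using the elementary inequality $-\log x \geq 1 - x$ for $x>0$ (so that $-\log x + (x-1)\geq 0$, with the left side behaving like $\tfrac12(x-1)^2$ near $x=1$), one gets a relation of the form $h(Y^G)-h(Y) \geq \E\{\sum_i(1 - J_{ii})\} = n - \E\{\mathrm{Tr}(\mathbf{J})\} = n\,(1-\rho(Y,Y^G))$. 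Since the AG hypothesis gives $h(Y^G)-h(Y)\leq n\epsilon(n)$, we conclude $1-\rho(Y,Y^G)\leq\epsilon(n)$; combined with $\rho(Y,Y^G)\geq 0$ and $\rho\leq 1$, this is precisely $|\rho(Y,Y^G)|\geq 1-\epsilon(n)$, i.e. $Y$ and $Y^G$ are ALD. I expect this chain — specifically justifying the passage from the logarithmic deficit to a genuine $O((1-\rho)^{?})$ control and keeping the constants and the $n$-scaling straight — to be the main obstacle, since one must ensure the bound is robust when individual $J_{ii}$ stray far from $1$ while their average is near $1$; the convexity inequality $-\log x \geq 1-x$ handles this cleanly because each summand is nonnegative.

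Once ALD is established, I would feed it into Lemma~\ref{truc}: with $X = \sqrt{a}X_1$ and $Y = X_2+Z$ (so $X+Y$ has a continuous density, as noted in Remark~\ref{contdens}), Lemma~\ref{truc} gives
\begin{equation}
h(X+Y) - h(X+Y^G) \leq c\cdot n\cdot\sqrt{1-\rho(Y,Y^G)} \leq c\cdot n\cdot\sqrt{\epsilon(n)} = n\,\epsilon(n),
\end{equation}
hence $h(X+Y^G) - h(X+Y) \geqa 0$. Finally I convert this to mutual informations: since $Z$ is independent of $X=\sqrt{a}X_1$ and of $X_2$, we have $I(X;X+Y) = h(X+Y) - h(Y-\text{the part independent of }X)$ — more precisely, writing $Y=X_2+Z$ with $X_2,Z$ independent of $X$, $I(X;X+Y) = h(X+X_2+Z) - h(X_2+Z)$ and similarly $I(X;X+Y^G) = h(X+X_2^G+Z) - h(X_2^G+Z)$. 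But $h(X_2^G+Z)=h(Y^G)\geq h(Y)=h(X_2+Z)$ always (MaxEnt), and the AG hypothesis in fact gives $h(Y^G)\leqa h(Y)$, so $h(Y^G) = h(Y)+\epsilon(n)$ terms cancel; combined with $h(X+Y^G)-h(X+Y)\geqa 0$ just proved, we get $I(X;X+Y^G) = h(X+Y^G) - h(Y^G) \geqa h(X+Y) - h(Y) = I(X;X+Y)$, which is the desired conclusion.
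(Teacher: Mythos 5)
Your proposal is correct and follows essentially the same route as the paper: invoke the Knöthe map of Lemma~\ref{NG2G}, use the change-of-variables formula and the AG hypothesis to show $1-\rho(Y,Y^G)\leq\epsilon(n)$, feed this into Lemma~\ref{truc}, and finish by the same cancellation $h(Y^G)=h(Y)+n\epsilon(n)$ to pass from entropies to mutual informations. The only (cosmetic) difference is in bounding $1-\rho$: you apply $-\log x\geq 1-x$ term-by-term to the diagonal entries of $\mathbf{J}$, whereas the paper chains the arithmetic--geometric mean (``Hadamard'') inequality $\tfrac{1}{n}\mathrm{Tr}(\mathbf{J})\geq(\det\mathbf{J})^{1/n}$ with Jensen's inequality $\E\{e^{U}\}\geq e^{\E U}$ --- both reduce to $\log x\leq x-1$ and yield the same $\epsilon(n)$ control.
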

\noindent The latter equation also reads, with our previous notations,
\begin{equation}
I(X_1;\sqrt{a}X_1 + X_2^G +Z)
\geqa I(X_1;\sqrt{a}X_1 + X_2 +Z).
 \end{equation}

\begin{proof}
By making the change of variable in the expression of $Y=F(Y^G)$ one obtains
\begin{equation}
h(Y)=h(F(Y^G))=h(Y^G)+\E \log \det \mathbf{J}
\end{equation}
Thus, since $Y$ is AG,  $\E \log \det \mathbf{J}\geqa 0$.
On the other hand from~\eqref{traceeq} by Hadamard's inequality,
\begin{align}
\rho(Y,Y^G) = \frac{1}{n}\E\{\mathrm{Tr} (\mathbf{J}) \}&\geq \E\{\sqrt[n]{ \det \mathbf{J}}\}\\
&\geq e^{\frac{1}{n}\E\log\det\mathbf{J}}
\end{align}
which shows that $Y$ and $Y^G$ are ALD, such that $1-\rho(Y,Y^G) \leq \epsilon(n)$. From Lemma~\ref{truc} it follows that $h(X+Y)-h(X+Y^G)\leqa 0$, hence $I(X;X+Y) = h(X+Y)-h(Y) \leqa h(X+Y^G)-h(Y^G) = I(X;X+Y^G)$.
\end{proof}

\section{The ``Missing'' Corner  Point}

For weak interference $a<1$, Costa~\cite{Costa85} has stated that the second corner point is given by~\eqref{costacorner} below.
A problematic issue in the proof was detected by Sason~\cite{Sason04} 
\nocite{Sason15}
and the corner point has been later dubbed ``missing''~\cite{Kramer06}.
Recently, Polyanskiy and Wu~\cite{Polyanskiy15} solved the missing corner point problem using optimal transport theory by showing Lipschtiz continuity of differential entropy with respect to the Wasserstein distance and Talagrand's transportation-information inequality. An independent solution using the I-MMSE approach was given by Bustin \textit{et al.}~\cite{Bustin14,Bustin15} for a restricted subset of inputs---and later more generally---by integration of the MMSE over a continuum of SNR values.
We provide yet another solution to the problem in continuation of previous investigations~\cite{CostaRioul14,CostaRioul15,RioulCosta15,RioulCosta16} that is close to Polyanskiy and Wu's but sidesteps the use of the Wasserstein distance. Our proof is based on Prop.~\ref{therealthing} and the following lemma.

\begin{lemma}\label{conc}
Let $Z$ be Gaussian independent of $X$ and write $Z_u\!=\!\sqrt{u}Z$.
For any positive $u< u'< u''$, there exists $\mu$ constant independent of $n$ such that
\begin{equation}
\begin{split}
I(X;X+&Z_{u'})-I(X;X+Z_u) \\&\geq \mu \cdot \bigl(  I(X;X+Z_{u''})-I(X;X+Z_{u'})  \bigr) 
\end{split}
 \end{equation}
 Consequently, $ I(X;X+Z_{u''})\geqa I(X;X+Z_{u'}) $  implies $I(X;X+Z_{u'})\geqa I(X;X+Z_u)$.
\end{lemma}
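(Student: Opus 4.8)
The statement is a "concavity-type" comparison among the three values $I(X;X+Z_u)$, $I(X;X+Z_{u'})$, $I(X;X+Z_{u''})$ at successive noise levels. My plan is to reparametrize by the noise variance itself: set $f(v) = I(X;X+\sqrt{v}\,Z)$ for $v>0$, so that the claim reads $f(u')-f(u) \geq \mu\,(f(u'')-f(u'))$ for some $\mu>0$ depending only on $u,u',u''$ (not on $n$ or on the distribution of $X$). Recall that $f$ is nonincreasing in $v$ (equivalently nondecreasing in $1/v$, which is Lemma~\ref{obvious}), so both differences $f(u')-f(u)$ and $f(u'')-f(u')$ are $\le 0$; the inequality is therefore a statement that the later decrement is at most $1/\mu$ times the earlier one, i.e. a uniform lower bound on how fast the "slope" of $f$ can flatten out. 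The natural tool is the I-MMSE relation, or more elementarily the fact that $v\mapsto f(v)$ is convex in $v$ (a known consequence of the EPI / the concavity of entropy power, and the excerpt explicitly says only weak forms of these are needed). Convexity of $f$ on the interval $[u,u'']$ immediately gives, by comparing the secant slopes on $[u,u']$ and $[u',u'']$, that
\[
\frac{f(u')-f(u)}{u'-u} \;\le\; \frac{f(u'')-f(u')}{u''-u'},
\]
and since all quantities are $\le 0$ this rearranges to $f(u')-f(u) \ge \mu\,(f(u'')-f(u'))$ with the explicit constant $\mu = (u'-u)/(u''-u') > 0$, which is indeed independent of $n$.

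The steps, in order, are: (1) reparametrize as above and record that $f$ is nonincreasing (Lemma~\ref{obvious}); (2) establish convexity of $v\mapsto I(X;X+\sqrt v\,Z)$ — either quote the I-MMSE identity $\frac{d}{d(1/v)} I(X;X+\sqrt v Z) = \tfrac12\,\mathrm{mmse}(X\mid X+\sqrt v Z)$ together with monotonicity of MMSE in the noise level, or derive convexity directly from the EPI in the weak form used elsewhere in the paper; (3) write the two-secant (three-chord) inequality for a convex function on $u<u'<u''$; (4) rearrange, using that each difference is $\le 0$, to extract the constant $\mu=(u'-u)/(u''-u')$; (5) deduce the "Consequently" clause: if $f(u'')-f(u')\geqa 0$, i.e. $f(u')-f(u'') \le n\epsilon(n)$, then since $f(u')-f(u'')\ge 0$ we get $0\le f(u')-f(u'') \le n\epsilon(n)$, hence by step~(4) $0 \le f(u)-f(u') \le \tfrac1\mu\,(f(u')-f(u'')) \le \tfrac1\mu n\epsilon(n) = n\epsilon(n)$, which is exactly $I(X;X+Z_{u'})\geqa I(X;X+Z_u)$.

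The main obstacle is step~(2): pinning down the convexity of $v\mapsto I(X;X+\sqrt v\,Z)$ in the multidimensional ($\mathbb R^n$) setting with only the "weak forms" of EPI/entropy-power concavity that the paper permits, and doing so with a bound uniform in $n$. The cleanest route is probably to note $I(X;X+Z_v) = h(X+Z_v) - h(Z_v)$ and that the map $v \mapsto h(X+Z_v)$ has the same convexity structure as the one-parameter Gaussian-perturbation family, which follows from the data-processing / divisibility argument already used in the proofs of Lemma~\ref{obvious} and Lemma~\ref{AWGZ} (writing $Z_{v'} = Z_v + Z'_{v'-v}$ with $Z'$ an independent copy of $Z$) combined with a single application of the entropy-power inequality; the constant $\mu$ then emerges purely from the geometry of the three points $u<u'<u''$ and carries no $n$-dependence. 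Once convexity is in hand, everything else is the elementary chord manipulation in steps~(3)--(5), and the translation into the $\leqa/\geqa$ language is routine given the convention $\epsilon(n)+\epsilon(n)=\epsilon(n)$ and $\tfrac1\mu\epsilon(n)=\epsilon(n)$.
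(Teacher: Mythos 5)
Your plan correctly identifies the overall strategy---reparametrize, invoke a concavity/convexity of the mutual-information profile, write a three-point chord inequality, and then route through the $\leqa/\geqa$ calculus---but it chooses the wrong parametrization, hence the wrong property, and then makes a sign error in the ``rearrangement.'' Keeping the noise-variance parametrization $f(v)=I(X;X+\sqrt v\,Z)$ and assuming convexity of $f$, the chord inequality
$\frac{f(u')-f(u)}{u'-u}\leq\frac{f(u'')-f(u')}{u''-u'}$
multiplied by $u'-u>0$ gives
$f(u')-f(u)\leq\frac{u'-u}{u''-u'}\bigl(f(u'')-f(u')\bigr)$,
with $\leq$, not $\geq$; the nonpositivity of both differences does not flip the direction of an inequality scaled by a positive factor. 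The $\leq$ version is useless for the ``Consequently'' clause, since it provides no lower bound on $f(u')-f(u)$. A numerical sanity check with $X$ Gaussian, $P=N=1$, $(u,u',u'')=(1,2,4)$---so $f(v)=\tfrac12\log(1+1/v)$, which \emph{is} convex---gives $f(u')-f(u)\approx-0.144$ and $f(u'')-f(u')\approx-0.091$; with your $\mu=(u'-u)/(u''-u')=1/2$ the claimed inequality $-0.144\geq-0.046$ is false, while the paper's $\mu=2$ gives $-0.144\geq-0.182$, which holds.

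The paper instead reparametrizes by the \emph{signal} scaling: with $t=1/u$ and $g(t)=I(X;X_t+Z)$, it uses \emph{concavity of $g$ in $t$}. Because $t\mapsto 1/v$ is decreasing, the triple $u<u'<u''$ reverses to $t>t'>t''$, and the concave three-chord inequality for $g$, translated back, gives exactly $f(u')-f(u)\geq\mu\bigl(f(u'')-f(u')\bigr)$ with $\mu=\frac{t-t'}{t'-t''}=\frac{u''(u'-u)}{u(u''-u')}$ (note the extra factor $u''/u$ missing from your expression). Your step~(2) is therefore also attributing the wrong property to the wrong tool: concavity of the entropy power $N(X+Z_v)$ gives concavity of $h(X+Z_v)$ in $v$, but $f(v)=h(X+Z_v)-h(Z_v)$ is a difference of two $v$-concave maps, and indeed convexity of $f$ in $v$ fails in general (e.g.\ for a binary input the mmse decays exponentially, so $\gamma^2\,\mathrm{mmse}(\gamma)$ is eventually decreasing and $f''<0$ for small $v$). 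The ingredient actually needed is concavity of $h(X_t+Z)$ in $t$, which---as the paper's own remark emphasizes---does \emph{not} follow from the DPI-based argument available for $h(X+Z_t)$ and requires de~Bruijn's identity or the I-MMSE relation. Replace step~(2) accordingly and your steps~(3)--(5) then go through with $\mu=(t-t')/(t'-t'')$.
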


\begin{IEEEproof}
Letting $t=1/u>t'=1/u'>t''=1/u''$,  it is equivalent to show that
$I(X;X_{t'}+Z)-I(X;X_t+Z) \geq \mu \cdot \bigl(  I(X;X_{t''}+Z)-I(X;X_{t'}+Z)  \bigr)$.
But this holds with $\mu=\frac{t-t'}{t'-t''}$ by concavity of $t\mapsto I(X;X_t+Z)$. 
\end{IEEEproof}

\begin{remark}
The concavity of $I(X;X_t+Z)$ or $h(X_t+Z)$ is a consequence of the concavity of the entropy power~\cite{Costa85a} $N(X_t+Z)$ but is strictly weaker as remarked in~\cite{Polyanskiy15}, since a concave function is not always exponentially concave. In fact it can be shown~\cite{Rioul11} that the concavity of $N(X_t+Z)$ is equivalent to the concavity of $N(X+Z_t)$. By taking the logarithm, this implies concavity of both $h(X_t+Z)$ and $h(X+Z_t)$. While the latter can be shown directly using the DPI~\cite{FahsAbouFaycal15}, the former requires de Bruijn's identity or the I-MMSE relation~\cite{GuoShamaiVerdu05}.
\end{remark}

\begin{proposition}
For the weak {\sf Z}-interference Gaussian channel,
\begin{equation}\label{costacorner}
C'_1=\frac{1}{2}\log \Bigl(1+ \frac{aP_1}{P_2+N}\Bigr).
\end{equation}
\end{proposition}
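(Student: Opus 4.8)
The plan is to prove the missing corner point~\eqref{costacorner} by a converse argument that starts from $nR_2\geqa nC_2$ and deduces $nR_1\leqa nC'_1$, following the chain of Lemma~\ref{maxrate} but now invoking the ``almost linear dependence'' machinery of Proposition~\ref{therealthing} together with the concavity Lemma~\ref{conc} to eliminate the genuinely Gaussian interference term. First I would apply Lemma~\ref{maxrate} to transmission~2: the hypothesis $nR_2\geqa nC_2$ gives that $X_2+Z$ is AG and that $\sqrt{a}X_1+X_2+Z$ is AL compared to $X_2+Z$ w.r.t.~$X_2$. Since $a\ne 0$, the Fork Lemma (Lemma~\ref{fork}) converts this into the statement that $\sqrt{a}X_1+X_2+Z$ is AL compared to $\sqrt{a}X_1+Z$ w.r.t.~$X_1$, i.e.\ $I(X_1;\sqrt{a}X_1+X_2+Z)\geqa I(X_1;\sqrt{a}X_1+Z)$.

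Next I would handle the receiver-1 side. By~\eqref{HW}--\eqref{DPI} applied with index~1, $nR_1\leqa I(X_1;Y_1)=I(X_1;X_1+Z)$. The goal is to bound this by $nC'_1=\tfrac12\log(1+aP_1/(P_2+N))$, which is exactly $I(X_1;\sqrt{a}X_1+X_2^G+Z)$ evaluated when $X_1$ is Gaussian of power~$P_1$ — but of course $X_1$ is not Gaussian. The strategy is a three-point comparison along the noise scale. Writing $X_{1,t}=\sqrt{t}X_1$ so that $I(X_1;X_1+Z)=I(X_1;X_{1,t}+Z)$ at $t=1$, I want to move to $t=a<1$, i.e.\ compare $I(X_1;X_1+Z)$ with $I(X_1;\sqrt{a}X_1+Z)$. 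Here I would combine: (i) Proposition~\ref{therealthing}, which (since $X_2+Z$ is AG, taking $Y=X_2+Z$, $X=\sqrt{a}X_1$) gives $I(X_1;\sqrt{a}X_1+X_2^G+Z)\geqa I(X_1;\sqrt{a}X_1+X_2+Z)$, hence by the AL conclusion above $I(X_1;\sqrt{a}X_1+X_2^G+Z)\geqa I(X_1;\sqrt{a}X_1+Z)$; and (ii) Lemma~\ref{conc}, which lets me propagate an ``almost equality at high noise'' down to ``almost equality at lower noise.'' Concretely, $\sqrt{a}X_1+X_2^G+Z$ has the law of $\sqrt{a}X_1+Z_{u''}$ with $u''=1+(P_2+N)/(aP_1)\cdot$(appropriate scaling), while $\sqrt{a}X_1+Z$ is $\sqrt{a}X_1+Z_{u'}$ with $u'<u''$, and $X_1+Z$ is $\sqrt{a}X_1+Z_{u}$ with $u<u'$; Lemma~\ref{conc} then upgrades $I(X_1;\sqrt a X_1+Z_{u''})\geqa I(X_1;\sqrt a X_1+Z_{u'})$ (which is (i) plus (ii)-input) to $I(X_1;\sqrt a X_1+Z_{u'})\geqa I(X_1;\sqrt a X_1+Z_u)$, i.e.\ $I(X_1;\sqrt{a}X_1+Z)\geqa I(X_1;X_1+Z)$. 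Combined with Lemma~\ref{obvious} (which gives the reverse $\geq$), this pins $I(X_1;X_1+Z)\leqa I(X_1;\sqrt{a}X_1+X_2^G+Z)$, and then MaxEnt on the Gaussian-interference channel — $I(X_1;\sqrt{a}X_1+X_2^G+Z)\leq nC'_1$ — finishes~\eqref{corner2}.

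I would lay the argument out as a display chain:
\begin{align*}
nR_1 &\leqa I(X_1;X_1+Z) \\
&\leqa I(X_1;\sqrt{a}X_1+Z) &&\text{(Lemmas~\ref{obvious},~\ref{conc})}\\
&\leqa I(X_1;\sqrt{a}X_1+X_2+Z) &&\text{(Fork + AL)}\\
&\leqa I(X_1;\sqrt{a}X_1+X_2^G+Z) &&\text{(Prop.~\ref{therealthing})}\\
&\leq nC'_1 &&\text{(MaxEnt)}
\end{align*}
being careful that each application of Lemma~\ref{conc} is valid: it requires the "outer" almost-inequality $I(X;X+Z_{u''})\geqa I(X;X+Z_{u'})$ as hypothesis, which is supplied precisely by the Prop.~\ref{therealthing}/AL steps further down the chain, so the logical order is actually "bottom-up" even though the display reads top-down.

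The main obstacle I expect is making the three-point concavity bookkeeping airtight: Lemma~\ref{conc} needs a fixed ordering $u<u'<u''$ of genuine noise levels with $u''$ corresponding to the AG-aided endpoint, and one must verify that the required inequality $1 < u'/u$ etc.\ hold for all $a<1$ and all $P_1,P_2,N$ — i.e.\ that adding the Gaussian $X_2^G$ really does correspond to a strictly larger noise scale than the bare channel, so the concavity constant $\mu$ is a genuine positive constant independent of $n$. A secondary subtlety is that Proposition~\ref{therealthing} was stated for $Y$ of variance $Q\le P_2+N$ with $X=\sqrt{a}X_1$; here $X_2$ has power only $\tfrac1n\E\|X_2\|^2\le P_2$, so $Q\le P_2+N$ holds and the constant $c$ in Lemma~\ref{truc} stays bounded — but I would double-check that the power-constraint hypotheses of Lemma~\ref{truc} (boundedness of $\E\|X_1\|^2$ by $nP_1$) are invoked correctly, since that is what keeps $c$ $n$-independent. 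Everything else is a routine concatenation of the already-established almost-inequalities.
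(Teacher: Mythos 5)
Your proposal is correct and follows essentially the same route as the paper: Lemma~\ref{maxrate} plus the Fork Lemma give the AL property of $\sqrt{a}X_1+X_2+Z$ compared to $\sqrt{a}X_1+Z$, Proposition~\ref{therealthing} upgrades $X_2$ to $X_2^G$, and Lemma~\ref{conc} with noise variances $aN<N<P_2+N$ propagates the almost-equality from noise level $N$ down to $aN$, after which MaxEnt closes the chain. The only differences are cosmetic---the paper writes the chain starting from $nC'_1$ and working down to $nR_1$, which happens to coincide with the logical dependency order you correctly flag as ``bottom-up,'' and your appeal to Lemma~\ref{obvious} is unnecessary since only the one-sided $\geqa$ from Lemma~\ref{conc} is used.
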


\begin{IEEEproof}
Suppose that $nR_2\geqa nC_2$. From Proposition~\ref{maxrate} written for transmission $2$, $X_2+Z$ is AG and adding interference $aX_1$ in $Y_2=\sqrt{a}X_1+X_2+Z$ is AL w.r.t. $X_2$. Since $a\ne 0$, by the Fork Lemma (Lemma~\ref{fork}), this implies that adding $X_2$ in $Y_2=\sqrt{a}X_1+X_2+Z$ is AL compared to $\sqrt{a}X_1+Z$ w.r.t.~$X_1$.

Therefore, 
\begin{subequations}
\begin{align}
\!\!\!nC'_1&= h(\sqrt{a}X_1^G+X_2^G+Z) - h(X^G_2+Z) \\
&\geq h(\sqrt{a}X_1+X_2^G+Z) - h(X^G_2+Z)  &\qquad\;\;\;\text{\llap{(MaxEnt)}}\\
&= I(X_1;\sqrt{a}X_1 + X_2^G +Z)\\
&\geqa I(X_1;\sqrt{a}X_1 + X_2 +Z)&\text{\llap{(Prop.~\ref{therealthing})}}\\
&\geqa I(X_1;\sqrt{a}X_1+Z) & \text{(AL)}\\
&\geqa I(X_1;\sqrt{a}X_1+\sqrt{a}Z) & \text{\llap{(Lemma~\ref{conc})}}\\
&=I(X_1;X_1+Z)=I(X_1;Y_1)\\
&\geqa nR_1  &\text{\llap{(see~\eqref{HW}--\eqref{DPI})}}
\end{align}
\end{subequations}
which proves that $nR_1\leqa nC'_1$ (cf.~\eqref{corner2}).
Notice that we have used  Lemma~\ref{conc} for $u=aN$, $u'=N$ and $u''=P_2+N$, in the form:  $I(X_1;\sqrt{a}X_1 + X_2^G +Z)\geqa I(X_1;\sqrt{a}X_1+Z)$ implies $I(X_1;\sqrt{a}X_1+Z)\geqa   I(X_1;\sqrt{a}X_1+\sqrt{a}Z)$.
\end{IEEEproof}

\section{Conclusion}

In this work, a complete determination of corner points of the capacity region of the two-user Gaussian interference channel is carried out, using the notions of almost Gaussian random vectors, almost lossless addition of random vectors,  and almost linearly dependent random vectors.  The resulting proofs use basic properties of Shannon's information theory.
Interestingly, only weak forms the entropy power inequality and the concavity of the entropy power are required.
This approach does not aim at finding best possible constants but yields a rigorous proof  for the determination of Costa's ``missing'' corner point which can be thought of as a variation of the solution of Polyanskiy and Wu which does not recourse to optimal transport theory nor to estimation theory.

\section*{Acknowledgments}
The author would like to thank Flavio Calmon,  Max Costa, Michèle Wigger and  Yihong Wu for their discussions.

%\bibliographystyle{IEEETran}
%\bibliography{IEEEabrv,paper}

% Generated by IEEEtran.bst, version: 1.12 (2007/01/11)

\vspace*{0.01ex}
\end{document}